\newtheorem{theorem}{Theorem}
\newtheorem{proposition}{Proposition}
\newtheorem{definition}{Definition}
\newtheorem{corollary}{Corollary}
\newcommand{\R}{\mathbb{R}}
\newcommand{\N}{\mathbb{N}}
\renewcommand{\eqref}[1]{(\ref{#1})}
\newcommand{\argmin}{\text{argmin}}
\newcommand{\revise}[1]{\textcolor{black}{#1}}%{blue}{#1}}
\newcommand{\issue}[1]{\textcolor{black}{#1}}
\title{\textbf{Learning Dissipative Chaotic Dynamics with\\ Boundedness Guarantees}}
\author{Sunbochen Tang}
\author{Themistoklis Sapsis}
\author{Navid Azizan}
\affil{Massachusetts Institute of Technology}
\date{} % leave empty for no date
\begin{document}

\maketitle

\begin{abstract}
Chaotic dynamics, commonly seen in weather systems and fluid turbulence, are characterized by their sensitivity to initial conditions, which makes accurate prediction challenging. Recent approaches have focused on developing data-driven models that attempt to preserve invariant statistics over long horizons since many chaotic systems exhibit dissipative behaviors and ergodicity. \revise{Despite the recent progress in such models, they are still often prone to generating unbounded trajectories, leading to invalid statistics evaluation. To address this fundamental challenge, we introduce a modular framework that provides formal guarantees of trajectory boundedness for neural network chaotic dynamics models. Our core contribution is a dissipative projection layer that leverages control-theoretic principles to ensure the learned system is dissipative. Specifically, our framework simultaneously learns a dynamics emulator and an energy-like function, where the latter is used to construct an algebraic dissipative constraint within the projection layer. Furthermore, the learned invariant level set provides an outer estimate for the system's strange attractor, which is known to be difficult to characterize due to its complex geometry.} We demonstrate our model's ability to produce bounded long-horizon forecasts that preserve invariant statistics for chaotic dynamical systems, including Lorenz 96 and a reduced-order model of the Kuramoto-Sivashinsky equation.
\end{abstract}

\section{Introduction}
Chaos, characterized by exponential divergence after infinitesimal initial perturbations, is ubiquitous in a variety of complex dynamical systems, including climate models \citep{lorenz1963deterministic} and turbulence in fluids \citep{kuramoto1978diffusion,ashinsky1988nonlinear}. The exponential separation makes it challenging to accurately predict trajectories of chaotic systems. 
However, many chaotic systems of practical interest across various domains, including weather models and fluid dynamics \cite{lorenz1963deterministic, kuramoto1978diffusion}, turn out to be \emph{dissipative}, meaning that their trajectories converge to a bounded and positively invariant set, often referred to as a strange attractor \cite{stuart1998dynamical}. Moreover, trajectories of dissipative chaos will visit almost every state on the attractor, resulting in ergodicity and invariant statistics \cite{guckenheimer2013nonlinear}. 
Consequently, rather than seeking pointwise-accurate predictions, the primary goal in modeling dissipative chaotic systems becomes capturing these invariant statistics over long forecast horizons.

Recent data-driven efforts have shown remarkable progress in building surrogate models that accelerate inference while preserving the long-term invariant statistics of dissipative chaos. These methods span a wide spectrum of structural assumptions and model complexity. On one end, structured nonlinear regression introduces physically motivated multi-level models to fit time series data efficiently \citep{majda2001mathematical, majda2012physics}. At the other end, deep learning approaches rely on the representation power of neural networks to directly model complex chaotic behavior from raw data, while incorporating knowledge of shared physical system behaviors as specific architecture choices or regularization schemes \citep{li2020fourier, raissi2019physics, brunton2022data, lu2021learning, kochkov2021machine, page2024recurrent}. Hybrid approaches leverage autoencoder architectures to latent representation spaces where the dynamics evolve in simpler forms, with inspirations from Koopman theory \citep{koopman1931hamiltonian}, Dynamic mode decomposition \citep{kutz2016dynamic}, PCA \citep{pearson1901liii}, etc. Beyond one-step prediction, recurrent sequential models have also been explored to promote stability and improve forecast accuracy using more input information \cite{mikhaeil2022difficulty, vlachas2018data, sangiorgio2020robustness}. In addition to standard recurrent models, a specific recurrent network architecture design for time series prediction, known as reservoir computing (RC), has demonstrated improved performance in reconstructing attractors in chaos and preserving invariant statistics \citep{lu2018attractor, vlachas2020backpropagation, bollt2021explaining}.

To predict statistical properties on the attractor, data-driven models must generate arbitrarily long trajectories during inference to sufficiently sample the invariant measure. In practice, these models adopt an autoregressive paradigm that iteratively predicts the next state from its own prior outputs, making them vulnerable to drifting outside the region of training data. Since model performance is generally unpredictable outside of the training dataset, these models are prone to producing unbounded trajectories and invalid statistical forecasts. 
For models including quadratic regression and recurrent neural networks (RNNs), theoretical analysis has been established to reveal their fundamental difficulty in generating bounded trajectories for chaos forecasting \cite{majda2012fundamental, mikhaeil2022difficulty}. For more complex models, including RCs and FNOs, empirical evidence of the same stability issue has also been reported \cite{lu2018attractor, pathak2017using, li2022learning, schiff2024dyslim}. As such, a core challenge remains, which is preventing data-driven models from generating unbounded trajectories in long-term prediction for chaotic dynamical systems.

To address this challenge, we introduce a neural network framework that establishes formal guarantees of trajectory boundedness by enforcing dissipative constraints on the learned model. To the best of our knowledge, this is the first work to achieve such provable guarantees in neural network models for chaotic dynamics. Our core contribution is a novel projection layer that enforces dissipativity by construction, consequently establishing trajectory boundedness guarantees. Rather than relying on system-specific domain knowledge, our approach learns both the underlying dynamics and an energy function that governs the dissipative behavior directly from data. Additionally, our method results in a learned energy function invariant set, which provides a tight outer approximation of the strange attractor. We first illustrate the key components in our framework using the Lorenz 63 system, then demonstrate its effectiveness in preserving invariant statistics and generating bounded trajectories for Lorenz 96 and a reduced-order model of the Kuramoto–Sivashinsky (KS) equation, highlighting its broad applicability across different chaotic benchmarks.

In the recent literature, there have been several solutions proposed to address the stability issue, which generally fall into two categories: the first focuses on empirical stabilization, and the second seeks formal guarantees at the cost of limited model expressivity. Within the first category, some techniques are architecture-specific, such as the use of noise-inspired regularization \citep{wikner2022stabilizing} or network pruning \citep{haluszczynski2020reducing} to improve the long-term stability of RC models. Broader neural network-based approaches aim to preserve the system's invariant statistics by constraining models to match key dynamical properties like Lyapunov exponents \citep{platt2023constraining} or by adding statistics-based regularization to the loss function \citep{jiang2024training, schiff2024dyslim}, or by combining regularization and post-training modifications to encourage dissipative behaviors \citep{li2022learning}. While these methods reduce instances of unbounded prediction trajectories, they often require system-specific prior knowledge and, importantly, lack formal guarantees for trajectory boundedness. The second category of methods seeks to establish such guarantees by directly incorporating physical principles; however, so far, it has only been achieved for specific model classes. For instance, for quadratic regression models, energy conservation can be enforced structurally through a specific mathematical construction \citep{majda2012physics}. The explicit mathematical structure in such regression models makes it possible to embed energy conservation, but also limits their expressivity and prevents applications to more complex and realistic chaotic systems.

In this work, we design a general framework that aligns data-driven models with energy dissipation in chaos without reliance on explicit model structures, by formulating constraints based on learned energy representations. We simultaneously learn the flow of the system and an energy function directly from data, similar to prior work on learning stabilizing control inputs to unknown systems \cite{min2023data}. By leveraging Lyapunov stability theory, we derive efficient algebraic conditions that characterize energy dissipation purely based on the learned flow and energy function, thus removing the need for system-specific information. The learned energy function and control-theoretic algebraic conditions together allow the design of a closed-form dissipative projection layer, which automatically enforces the learned flow to be dissipative and consequently guarantees trajectory boundedness.

\begin{figure*}[h!]
    \centering
    \includegraphics[width=0.95\linewidth]{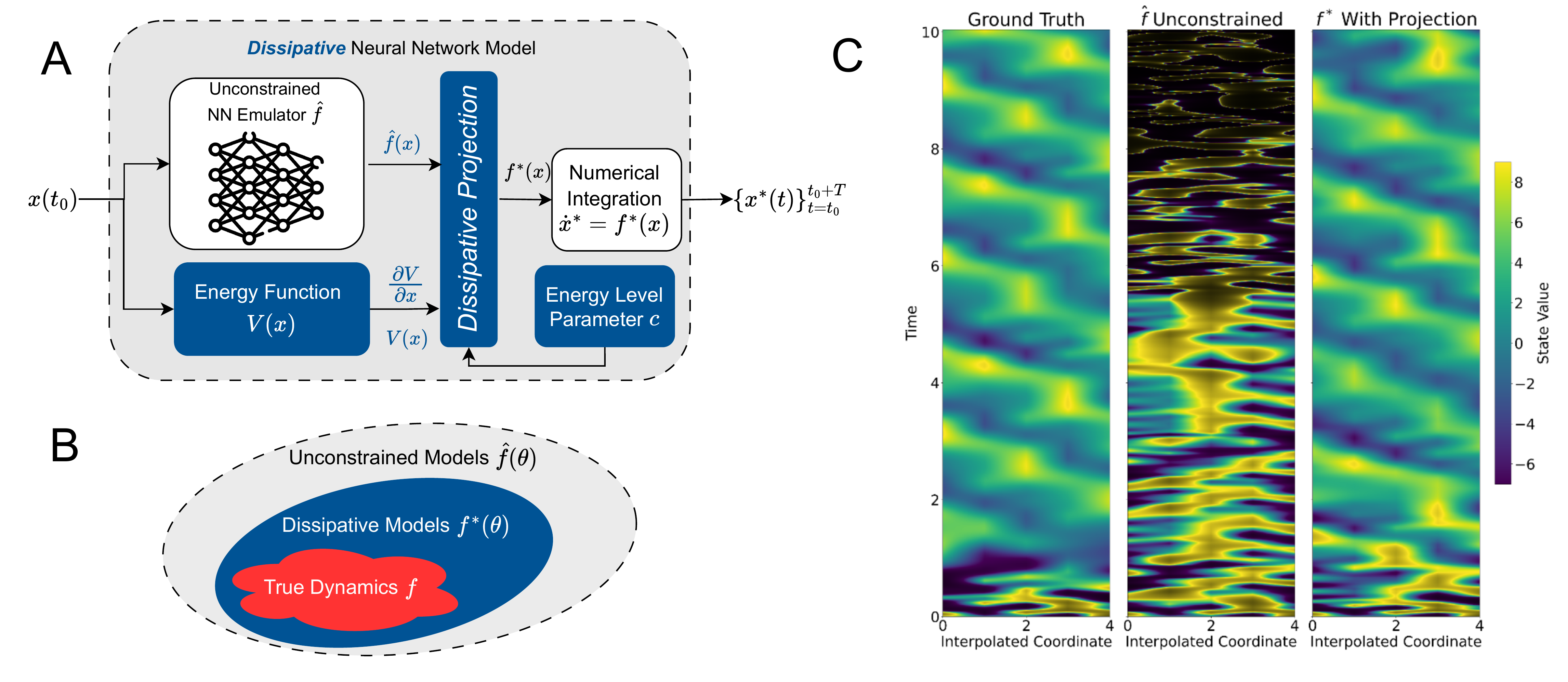}
    \caption{Overview of our approach and main results. (A) Current autoregressive models without constraints suffer from accumulated error, leading to trajectories growing unbounded over extended rollout. Our approach overcomes this issue by building a dissipative projection layer that ensures the model is dissipative and guarantees bounded trajectories. (B) The dissipative constraint effectively narrows the search space of model parameters because the true dynamics of interest are dissipative. (C) Hovm{\"o}ller diagrams of the Lorenz 96 system (with linear interpolation in the spatial dimension): Our approach ($f^*$ with projection, on the right) generates a bounded trajectory that reproduces flow characteristics seen in the ground truth trajectory (on the left), while the unconstrained neural network generates an unbounded trajectory failing to capture any meaningful statistics. }
    \label{fig:main}
\end{figure*}

\section{Data-driven Dissipative Chaos Modeling}
Consider a chaotic dynamical system described by the finite-dimensional ODE
\begin{equation}\label{eq:dynamics}
    \dot{x}(t) = f(x(t)), 
\end{equation}
where $x(t) \in \mathbb{R}^n$ represents the state of the dynamical system at time $t$, and $f: \R^n \to \R^n$ is a nonlinear function that governs the dynamics. Many chaotic systems are \textit{dissipative}, meaning that their trajectories eventually enter a bounded positively invariant set on which they exhibit chaotic behaviors. On this invariant set, also known as the strange attractor, the system becomes ergodic and consequently produces well-defined long-term statistical properties.
\begin{definition}\label{def:dissipativity}
    We say that the system in Eq.~\ref{eq:dynamics} is \textbf{dissipative} if there exists a bounded and positively invariant set $M \subset \mathbb{R}^n$ such that $$\lim_{t\to\infty} \text{dist}(x(t), M) = 0, \quad \text{dist}(x(t), M) = \inf_{y\in M} \|x(t) - y\|.$$
    In other words, every trajectory of the system will converge to $M$ asymptotically, and stays within $M$ once it enters. $M$ is said to be \textbf{globally asymptotically stable}.\footnote{The definition is derived based on the one introduced by \cite{stuart1998dynamical}, where they stated dissipativity as every trajectory will enter $M$ eventually. Here, we quantify this behavior using the notion of asymptotic stability and trajectory distance to $M$.}
\end{definition}

\issue{The practical significance of studying dissipative chaotic systems is the possibility to shift the focus from the challenging task of predicting exact chaotic trajectories to the more tractable goal of capturing their invariant statistical properties over long-term trajectory forecasts. In data-driven modeling for dissipative chaos, our objective becomes constructing a neural network dynamics emulator, $\hat{f}: \R^n \to \R^n$, that generates prediction trajectories $\{\hat{x}(t)\}_{t = t_0}^{t_0 + T}$ over a future time horizon $T$ by solving the initial value problem
\begin{equation}\label{eq:IVP}
    \dot{\hat{x}}(t)=\hat{f}\bigl(\hat{x}(t)\bigr), \quad \hat{x}(t_0)=x(t_0),
\end{equation}
from which the invariant statistics of the true system can be reproduced. The premise of reproducing meaningful statistical properties is that the trajectory remains bounded for a sufficiently long forecast horizon $T$. Although empirical success has been reported in a variety of recently proposed autoregressive models \cite{li2020fourier, lu2018attractor}, these models are prone to finite-time blowup in long trajectory rollouts, as illustrated in Fig.~\ref{fig:main}, which prohibits their use for statistical evaluation. This happens because the training dataset covers only a subset of the strange attractor, and without explicit guarantees, errors accumulate when the model encounters states outside the training set, leading to divergence. Moreover, the exponential separation characteristic of chaotic systems embedded in the training data further amplifies these errors, contributing to divergence in model rollouts during test time.
% More specifically, since these models are trained on a dataset consisted of finite number of trajectories, training data cannot cover the entire strange attractor, and there is no guarantee of the model's behavior outside the training set. Due to the exponential separation characteristics of the chaotic systems, without formal guarantee of boundedness, the iterative process can accumulate prediction error on states unseen during training and ultimately lead to unbounded trajectories or divergence from the attractor. 
}

To address this challenge, our approach introduces a \textit{dissipative projection} layer to enforce an energy-based constraint into the neural network, which guarantees trajectories always remain bounded, ensuring their validity for long-term statistics evaluation. As illustrated in Fig.~\ref{fig:main}(A), our \textit{dissipative NN model} jointly learns a dynamics emulator, an energy function, and an energy level parameter directly from data. This approach forces the learned dynamics model $\dot{x}^* = f^*(x)$ to be dissipative and also provides an outer estimate for the strange attractor as an energy level set $\{x: V(x) \leq c\}$. In numerical experiments for the Lorenz 96 system \cite{lorenz1996predictability}, while we observe that the unconstrained model in \cite{li2020fourier} generated unbounded trajectories during test time, our constrained model produces bounded trajectories that match the statistics patterns seen in the ground truth Hovm{\"o}ller \cite{hovmoller1949trough} diagram. This result exemplifies the importance of enforcing dissipativity for reliable long-term predictions in dissipative chaos modeling. Beyond ensuring trajectory boundedness, as illustrated in Fig.~\ref{fig:main}B, since the true dynamics are dissipative, by only constructing models that are certified to be dissipative, this constraint effectively narrows the search space for neural network parameters. This is especially helpful in the limited data regime, where it becomes difficult for the model to learn the dissipative behaviors by simply fitting labeled data under trajectory prediction settings.

% \begin{figure}[ht]
%     \centering
%     \includegraphics[width=0.7\linewidth]{figures/vene.pdf}
%     \caption{The dissipative constraint effectively narrows the search space of model parameters because the true dynamics of interest are dissipative.}
%     \label{fig:vene}
% \end{figure}

\section{Dissipative Dynamics: A Control-theory Perspective}
To develop models with inherent dissipativity, we first need to understand theoretical conditions that make a dynamical system dissipative. Note that dissipativity describes a system's energy behavior over time, independent of whether the system is chaotic or not. In this section, we focus on deriving algebraic conditions that are computationally efficient through the connection between dissipativity and energy, which are crucial for our proposed architecture that guarantees dissipativity.

\subsection{Energy-based Conditions for Dissipativity}

Dissipativity, as the name suggests, has a close relationship with energy in a dynamical system. Intuitively, a dissipative system will lose energy over time, which corresponds to the trajectory converging to a bounded set. Although the trajectory convergence behavior is quantitatively stated in Definition~\ref{def:dissipativity}, given a system $\dot{x} = f(x)$ known to be dissipative with access to the true dynamics $f$, it is still challenging to quantify the set $M$ that the trajectory converges to. In the specific context of dissipative chaos, there has been a body of literature that tries to address this issue by studying invariant manifolds, volume contraction, and attempting to characterize the strange attractor \citep{stuart1998dynamical}. Despite the rigorous treatment and the progress over the years that help us understand the strange attractor, these characterizations are often stated in abstract mathematical concepts and a descriptive manner that is intractable to computationally verify, e.g., \citep{milnor1985concept}. Given our goal of enforcing dissipativity in neural network models, it is crucial to first derive computationally efficient conditions that ensure a system is dissipative.

In control theory, the concept of Lyapunov functions has been used extensively to formalize asymptotic stability of dynamical systems, which are also known as ``energy-like'' functions due to strong connections with the mechanical energy of the system. More importantly, by leveraging the level set of such functions, numerous computationally tractable conditions have been derived and extensively used in designing practical controllers to ensure a system's asymptotic stability to equilibrium points \citep{khalil2002nonlinear}. By generalizing asymptotic stability with respect to an equilibrium point to a level set of a Lyapunov function, we derive computationally efficient conditions that ensure dissipativity in a dynamical system. 

Recall in Definition~\ref{def:dissipativity}, a dissipative system requires the existence of a bounded set $M$, which satisfies (1) $M$ is an invariant set (2) the system is globally asymptotically stable towards $M$. By reducing the definition to the existence of a Lyapunov function $V$ and choosing $M$ to be a level set of $V$, i.e., $M(c) = \{x: V(x) \leq c\}$ where $c > 0$ corresponds to the energy level, we derive the following conditions for invariance and asymptotic stability of $M(c)$ in Proposition~\ref{prop:invariance} and~\ref{prop:attractivity}, respectively.

\begin{proposition}[invariant level set]\label{prop:invariance}
    For a dynamical system in Eq.~\ref{eq:dynamics}, suppose there is a continuously differentiable scalar-valued function $V: \R^n \to \R$ and a constant $c > 0$, such that 
    \begin{align*}
        \forall x \in \{x \in \R^n: V(x) > c\}, \dot{V}(x) = \frac{\partial V}{\partial x} f(x) \leq 0. \footnotemark{}
    \end{align*}
    \footnotetext{Here $\frac{\partial V}{\partial x}$ refers to the row vector $[\frac{\partial V}{\partial x_1}, ..., \frac{\partial V}{\partial x_n}]$.}
    Then the level set $M(c) = \{x: V(x) \leq c\}$ is a positively invariant set for the system Eq.~\ref{eq:dynamics}.
\end{proposition}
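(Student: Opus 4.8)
The plan is to prove positive invariance directly from the definition by a first-exit (barrier) argument: I will take a solution $x(t)$ that starts inside $M(c)$, i.e.\ with $V(x(0)) \le c$, and show it can never cross the boundary $\{V = c\}$ into the exterior $\{V > c\}$. The whole argument hinges on the scalar function $\phi(t) := V(x(t))$, which is continuous and (wherever the solution exists) differentiable, with $\dot\phi(t) = \frac{\partial V}{\partial x}(x(t))\, f(x(t)) = \dot V(x(t))$ by the chain rule together with $\dot x = f(x)$. The hypothesis says exactly that $\dot\phi(t) \le 0$ whenever $\phi(t) > c$, so the intuition is that $\phi$ can only be decreasing in the forbidden region, which ought to forbid escape.

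Concretely, I would argue by contradiction. Suppose there is some $T > 0$ with $\phi(T) > c$, and define $t_0 = \sup\{\, t \in [0,T] : \phi(t) \le c \,\}$. First I would check that $t_0$ is well defined (the set contains $0$ since $\phi(0) \le c$) and that $0 \le t_0 < T$ (continuity of $\phi$ and $\phi(T) > c$ force a whole neighborhood of $T$ to lie strictly above $c$). Next, using continuity of $\phi$ and the definition of the supremum, I would establish that $\phi(t_0) = c$ while $\phi(t) > c$ for every $t \in (t_0, T]$. On that open interval the hypothesis applies pointwise, giving $\dot\phi(t) \le 0$, so $\phi$ is nonincreasing on $[t_0, T]$; hence $\phi(T) \le \phi(t_0) = c$, contradicting $\phi(T) > c$. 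Concluding, no such $T$ exists, so $\phi(t) \le c$ for all $t \ge 0$, which is precisely $x(t) \in M(c)$ for all $t \ge 0$, i.e.\ $M(c)$ is positively invariant.

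The step I expect to be the main obstacle---and the reason a naive global monotonicity argument fails---is the behavior on the boundary $\{V = c\}$, where the hypothesis gives no control on the sign of $\dot V$. One cannot simply assert that $V(x(t))$ is nonincreasing everywhere, because on the boundary the trajectory could a priori have $\dot V > 0$. The first-exit construction sidesteps this precisely by isolating the last time $t_0$ at which $\phi = c$ and invoking the sign condition only on the exterior portion $(t_0, T]$, where it is guaranteed to hold, while recovering the value at $t_0$ by continuity rather than by the differential inequality. I would also flag the mild standing assumptions that make the chain-rule identity meaningful---existence of the solution on $[0,T]$ and continuity of $t \mapsto x(t)$---and note that the result is a level-set specialization of Nagumo-type invariance theorems, so one could instead cite a subtangentiality condition; however, the self-contained barrier argument above keeps everything elementary and is cleaner to present.
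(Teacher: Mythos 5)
Your proof is correct and takes essentially the same approach as the paper's: both argue by contradiction via a last-crossing construction, isolating the final time $t_0$ at which $V(x(t)) = c$ before the putative exit time and invoking the sign hypothesis only on the exterior segment where $V(x(t)) > c$. The only cosmetic difference is the final step---the paper applies the mean value theorem to exhibit a point in the exterior segment with $\dot V > 0$, contradicting the hypothesis, whereas you integrate $\dot\phi \le 0$ into monotonicity of $\phi$ on $[t_0, T]$ and contradict $\phi(T) > c$; both rest on the same mean-value mechanism.
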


\begin{proposition}[asymptotic stability]\label{prop:attractivity}
     For a dynamical system in Eq.~\ref{eq:dynamics}, suppose there is a lower-bounded continuously differentiable scalar-valued function $V: \R^n \to \R$ and a constant $c > 0$, such that 
    \begin{align*}
        &\text{ (1) } \forall x \in \{x \in \R^n: V(x) > c\}, \dot{V}(x) < 0;\\
        &\text{ (2) } V \text{ is radially unbounded}.
    \end{align*}
    Then the level set $M(c) = \{x: V(x) \leq c\}$ is globally asymptotically stable.
\end{proposition}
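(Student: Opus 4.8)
The plan is to run the classical Lyapunov--LaSalle argument, but with the target being the level set $M(c)$ rather than an isolated equilibrium. Global asymptotic stability in the sense of Definition~\ref{def:dissipativity} has two ingredients: invariance (a trajectory that enters $M(c)$ stays) and attractivity ($\mathrm{dist}(x(t), M(c)) \to 0$ for every trajectory). The invariance part is immediate: condition~(1), namely $\dot V(x) < 0$ on $\{V > c\}$, implies the weaker hypothesis $\dot V(x) \leq 0$ of Proposition~\ref{prop:invariance}, so $M(c)$ is already positively invariant. It remains to establish attractivity for an arbitrary trajectory with $V(x(0)) = c_0 > c$, and the radial unboundedness hypothesis is what makes this hold globally.

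First I would confine the trajectory to a compact set and thereby secure global existence in forward time. Since $\dot V < 0$ wherever $V > c$, the map $t \mapsto V(x(t))$ is strictly decreasing as long as the trajectory remains in $\{V > c\}$, so $V(x(t)) \leq c_0$ for all $t \geq 0$. Radial unboundedness makes the sublevel set $\Omega_{c_0} = \{x : V(x) \leq c_0\}$ bounded, and continuity of $V$ makes it closed; hence $\Omega_{c_0}$ is compact, the trajectory lives in it, and no finite escape time can occur.

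Next I would show $V(x(t)) \to c$. If the trajectory enters $M(c)$ in finite time, invariance closes the argument, so assume $V(x(t)) > c$ for all $t$. Then $V(x(t))$ is decreasing and bounded below (by $c$, and in any case by the lower-bound hypothesis), so it converges to some limit $L \geq c$. Suppose for contradiction that $L > c$. Then the trajectory stays in the compact annular set $K = \{x : L \leq V(x) \leq c_0\}$, on which $V(x) \geq L > c$ everywhere and hence $\dot V < 0$. As $\dot V$ is continuous (since $V \in C^1$ and $f$ is continuous), it attains a maximum $-\gamma < 0$ on $K$, giving $\tfrac{d}{dt} V(x(t)) \leq -\gamma$ for all $t$; integrating yields $V(x(t)) \leq c_0 - \gamma t \to -\infty$, contradicting the lower bound on $V$. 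Therefore $L = c$.

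Finally I would convert $V(x(t)) \to c$ into $\mathrm{dist}(x(t), M(c)) \to 0$, and this is the step most needing care. Fixing $\epsilon > 0$, the set $\{x \in \Omega_{c_0} : \mathrm{dist}(x, M(c)) \geq \epsilon\}$ is compact and disjoint from $M(c)$, so $V > c$ on it, and by compactness $V \geq c + \delta$ there for some $\delta > 0$. Since $V(x(t)) \to c$, eventually $V(x(t)) < c + \delta$, which forces $\mathrm{dist}(x(t), M(c)) < \epsilon$, so the distance tends to zero. The main obstacle is precisely this last implication: monotone decay of the energy $V$ to the level $c$ only controls $V$, and without the compactness supplied by radial unboundedness one could a priori have $V(x(t)) \to c$ while $x(t)$ drifts off; the argument therefore rests on combining radial unboundedness, continuity of $V$, and the uniform gap $\delta$ to pin the trajectory near $M(c)$.
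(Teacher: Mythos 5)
Your proof is correct and follows essentially the same route as the paper's: monotone decay of $V$ along a trajectory that never enters $M(c)$, compactness of the set $\{x : L \leq V(x) \leq V(x(0))\}$ guaranteed by radial unboundedness and continuity, a uniform bound $\dot{V} \leq -\gamma < 0$ on that compact set, and integration to contradict the lower bound on $V$. If anything, your final step converting $V(x(t)) \to c$ into $\mathrm{dist}(x(t), M(c)) \to 0$ via the uniform gap $\delta$ on $\{x \in \Omega_{c_0} : \mathrm{dist}(x, M(c)) \geq \epsilon\}$ is spelled out more carefully than in the paper, which simply treats the two statements as equivalent without proof.
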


The proofs for both propositions are included in the SI Appendix. In Fig.~\ref{fig:invariance}, we sketch the intuitions behind the conditions in Proposition~\ref{prop:invariance} and~\ref{prop:attractivity}.

\begin{figure}[h!]
    \centering
    \includegraphics[width=\columnwidth]{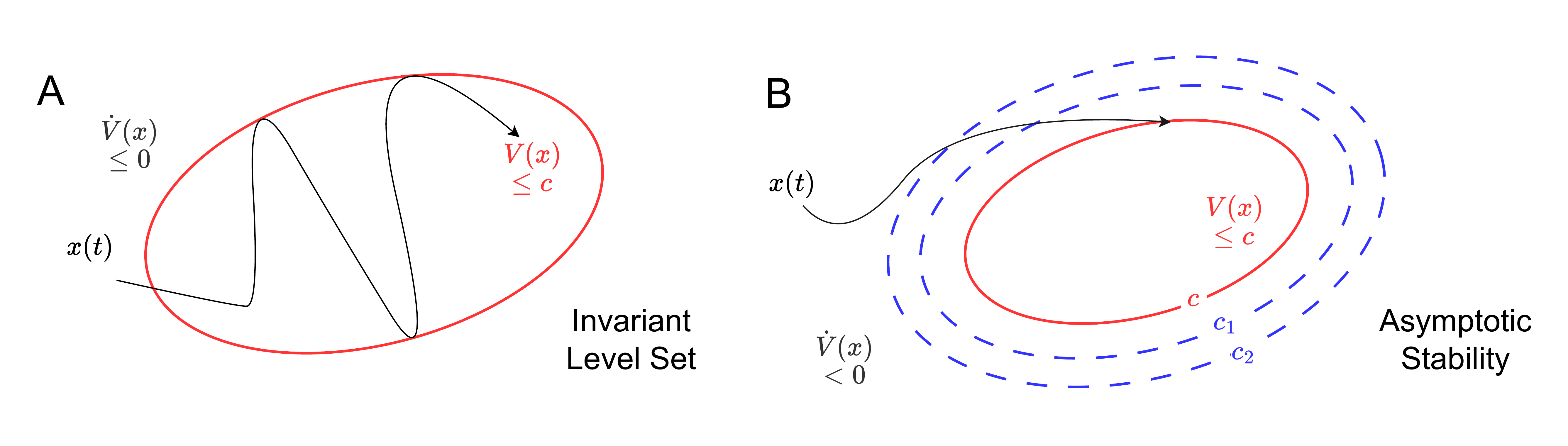}
    \caption{Illustrations of theoretical results in Proposition~\ref{prop:invariance}~and~\ref{prop:attractivity}: (A) The level set boundary serves as a barrier since the trajectory cannot gain energy outside. Once entering $M(c)$, the trajectory will be confined within. (B) The trajectory loses energy over time outside $M(c)$ because $\dot{V}(x) \leq 0$, resulting in convergence to the level set $M(c)$ eventually ($c_2 > c_1 > c > 0$).}
    \label{fig:invariance}
\end{figure}

\subsection{Algebraic Conditions for Dissipativity and Attractor Outer Estimation}

Compared to Definition~\ref{def:dissipativity}, the conditions derived in the propositions above are much more quantitative. However, verifying these conditions for a given system in Eq.~\ref{eq:dynamics} computationally is not trivial; for example, the negative semi-definite condition for $\dot{V}(x)$ is defined only on a certain part of the state space, outside the level set $M(c)$ in Proposition~\ref{prop:invariance}. Inspired by the S-procedure in sum-of-squares programming \citep{blekherman2012semidefinite}, we can replace these conditions with a single algebraic condition that is independent of the state, which is more computationally tractable.

\begin{theorem}\label{thm:main}
    Suppose there exists a lower-bounded radially unbounded $C^1$ function $V: \R^n \to \R$ and a constant $c > 0$ such that for the dynamical system in Eq.~\ref{eq:dynamics},
    \begin{equation}\label{eq:stability}
        \forall x \in \R^n, \dot{V}(x) + V(x) - c \leq 0.
    \end{equation}
    Then the system Eq.~\ref{eq:dynamics} is dissipative and $M(c)$ is globally asymptotically stable.
\end{theorem}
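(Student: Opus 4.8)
The plan is to show that the single algebraic inequality \eqref{eq:stability} is in fact \emph{stronger} than the pointwise hypotheses already established in Proposition~\ref{prop:invariance} and Proposition~\ref{prop:attractivity}, so that the theorem follows by invoking both. The key algebraic observation is to rewrite \eqref{eq:stability} as $\dot{V}(x) \leq c - V(x)$ for all $x \in \R^n$. Restricting attention to the exterior region $\{x : V(x) > c\}$, the right-hand side $c - V(x)$ is strictly negative, so we immediately obtain $\dot{V}(x) < 0$ for every $x$ with $V(x) > c$. This is the one deduction that does all the work: the term $V(x) - c$ plays exactly the role of a nonnegative s-procedure multiplier on the exterior region, trading a state-dependent domain restriction for a single global inequality.

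With this in hand I would conclude in two steps. First, since $\dot{V}(x) < 0 \leq 0$ on $\{V(x) > c\}$, the hypothesis of Proposition~\ref{prop:invariance} holds, so $M(c) = \{x : V(x) \leq c\}$ is positively invariant. Second, the strict inequality $\dot{V}(x) < 0$ on $\{V(x) > c\}$ is precisely hypothesis (1) of Proposition~\ref{prop:attractivity}, while hypothesis (2), radial unboundedness, together with lower-boundedness of $V$, is assumed in the theorem; hence Proposition~\ref{prop:attractivity} gives that $M(c)$ is globally asymptotically stable. Finally, radial unboundedness of $V$ forces each sublevel set, in particular $M(c)$, to be bounded, so $M(c)$ is a bounded, positively invariant, globally attracting set and the system is dissipative in the sense of Definition~\ref{def:dissipativity}.

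The step that most deserves care — and which I would spell out rather than leave to the cited propositions — is \emph{forward completeness}: the statement $\lim_{t\to\infty}\mathrm{dist}(x(t), M(c)) = 0$ is only meaningful if the solution $x(t)$ exists for all $t \geq 0$, which is not automatic for a nonlinear $f$. This is handled cleanly by the comparison lemma. Along any trajectory set $w(t) := V(x(t))$; then $\dot{w}(t) = \tfrac{\partial V}{\partial x}f(x(t)) = \dot{V}(x(t)) \leq c - w(t)$, and Gr\"onwall's inequality yields $w(t) \leq c + (w(0) - c)e^{-t}$. Thus $V(x(t))$ stays bounded above by $\max(V(x(0)), c)$ for all $t$, and by radial unboundedness $x(t)$ remains in a compact set, ruling out finite-time blowup and guaranteeing forward completeness. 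As a bonus, this same bound gives an \emph{explicit exponential rate} at which $V(x(t))$ enters $M(c)$, which is stronger than the qualitative convergence asserted.

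The last subtlety is converting $\limsup_{t\to\infty} V(x(t)) \leq c$ into the distance statement $\mathrm{dist}(x(t), M(c)) \to 0$, and I expect this to be the genuine (if mild) obstacle, since $V(x(t)) \to c$ does not literally mean $x(t)$ approaches $M(c)$ without a compactness argument. The plan is a shrinking-shell argument: for $\epsilon > 0$ the set $M(c+\epsilon)$ is compact (bounded by radial unboundedness, closed by continuity of $V$), and $\sup_{x \in M(c+\epsilon)} \mathrm{dist}(x, M(c)) \to 0$ as $\epsilon \to 0^+$ — otherwise a sequence $x_{\epsilon_k}$ with distance bounded below would, by compactness, have a limit point $x^\star$ with $V(x^\star) \leq c$, i.e. $x^\star \in M(c)$, a contradiction. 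Combining the exponential bound $w(t) \leq c + (w(0)-c)e^{-t}$, which places $x(t)$ in $M(c+\epsilon)$ for all large $t$, with this uniform shell bound gives $\mathrm{dist}(x(t), M(c)) \to 0$, completing the proof.
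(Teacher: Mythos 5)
Your proposal is correct, and its skeleton coincides with the paper's: the paper's entire proof of Theorem~\ref{thm:main} consists of the observation that \eqref{eq:stability} forces $\dot{V}(x) \leq -(V(x)-c) < 0$ whenever $V(x) > c$, followed by an appeal to Propositions~\ref{prop:invariance} and~\ref{prop:attractivity} --- exactly your first two paragraphs. Where you genuinely diverge is in what comes after. The paper stops at the citation; you add a comparison-lemma/Gr\"onwall argument giving $V(x(t)) \leq c + (V(x(0))-c)e^{-t}$, and this buys three things the paper's two-line proof does not supply. First, forward completeness: the paper nowhere rules out finite escape time, whereas your bound traps the trajectory in the sublevel set $\{x : V(x) \leq \max(V(x(0)), c)\}$, which is compact by radial unboundedness and continuity, so solutions extend to all $t \geq 0$ (granting, as the paper tacitly does, enough regularity on $f$ for local existence). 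Second, an explicit exponential rate of approach to $M(c)$, strictly stronger than the qualitative convergence asserted. Third, your shrinking-shell argument closes a step that the paper's own proof of Proposition~\ref{prop:attractivity} leaves implicit: that proof derives a contradiction only from the assumption $V(x(t)) \to a$ with $a > c$, so it yields $V(x(t)) \to c$ at best and never converts this into $\mathrm{dist}(x(t), M(c)) \to 0$ as Definition~\ref{def:dissipativity} requires; your observation that $\sup_{x \in M(c+\epsilon)} \mathrm{dist}(x, M(c)) \to 0$ as $\epsilon \to 0^{+}$, proved by compactness of $M(c+\epsilon)$, is precisely the missing conversion. In short: same key deduction as the paper, but your route is self-contained where the paper delegates to the propositions, and strictly more informative (quantitative rate, explicit completeness) at essentially no extra cost.
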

\begin{proof}
    The condition Eq.~\ref{eq:stability} implies that $\forall x \in \R^n$ such that $V(x) > c$, $\dot{V}(x) \leq -(V(x) - c) < 0$. Therefore, by Proposition~\ref{prop:invariance} and~\ref{prop:attractivity}, the level set $M(c)$ is both globally asymptotically stable and positively invariant.
\end{proof}

\textbf{Strange Attractor Outer Estimation.} Recall the characteristics of the strange attractor discussed in the background section; if only the first two properties are satisfied by a set $S$, then this set is known as ``the attractor'' \citep{strogatz2018nonlinear}. The difference from the strange attractor is that the attractor does not require the exponential separation of neighboring trajectories everywhere, hence it is also a superset of the strange attractor. Note that if a set $S\subset M(c)$ is an attractor, then $M(c)$ is both invariant and globally asymptotically stable. Therefore, in addition to certifying a system to be dissipative, the level set $M(c)$ also provides an outer approximation of the original attractor, which is consequently an outer estimate of the strange attractor as well.

In practice, characterizing the strange attractor is challenging due to its complex geometry \citep{milnor1985concept}. In addition to empirically reproducing the strange attractor using the learned model similar to \citep{li2022learning,jiang2024training}, our method also provides a level set outer estimate for the strange attractor by learning the Lyapunov function $V$ and level set parameter $c$.

\textbf{Importance of the Algebraic Condition.} The condition in Eq.~\ref{eq:stability} is strictly stronger than the conditions in Proposition~\ref{prop:invariance} and~\ref{prop:attractivity}. Although the condition is slightly more conservative, the computational tractability of obtaining a state-independent algebraic condition is well worth the trade-off. More importantly, this algebraic condition is crucial for constructing our proposed model architecture that ensures the learned system is dissipative, which will be discussed in detail in the next section.

\section{Inherently Dissipative Neural Network Dynamics Model}

Now we propose our architecture that simultaneously learns a dynamics emulator and a Lyapunov function $V$, with the former guaranteed to be dissipative by enforcing the condition in Eq.~\ref{eq:stability} through the construction of a projection layer. The overall structure is illustrated in Fig.~\ref{fig:main}(A).
% \begin{figure}[H]
%     \centering
%     \includegraphics[width=0.5\textwidth]{figures/Autoregressive.pdf}
%     \caption{An illustration of the proposed neural network model.}
%     \label{fig:network diagram}
% \end{figure}

\revise{Our proposed model consists of three learnable components: a neural network dynamics emulator $\hat{f}$ that approximates the true nonlinear dynamics $f$ on the right-hand side of Eq.~\ref{eq:dynamics}; a quadratic energy-like function $V(x) = (x - x_0)^T Q (x - x_0)$ serving as the Lyapunov function with learnable parameters $Q$ ($Q$ is parameterized to be positive definite) and $x_0$ that define the shape and center of the level sets, respectively; and a level set parameter $c > 0$. Using these three components and the gradient information $\partial V/\partial x$, we construct a ``dissipative projection'' layer that outputs a predicted dynamics emulator $f^*(x) \in \R^n$ which guarantees dissipativity of its corresponding dynamics $\dot{x} = f^*(x)$.}

\revise{A key feature of our method is its modularity, which provides the flexibility to enforce stability as an ``add-on" scheme. The proposed ``dissipative projection" layer is agnostic to the architectural choice for the dynamics emulator, which allows the integration of any suitable function approximator as the backbone model $\hat{f}$. Similarly, while we use a quadratic form for the Lyapunov function $V$ for simplicity, our framework accommodates any function that satisfies the conditions outlined in Theorem~\ref{thm:main}. This flexibility establishes our approach as a general-purpose pipeline for enforcing stability, rather than a solution dependent on a specific class of models.}

% \revise{Note that we choose to parameterize the Lyapunov function $V$ as a quadratic function for simplicity, but the framework can accommodate any function satisfying the requirements stated in Theorem~\ref{thm:main}. Similarly, the framework can be adapted to function approximators that are more expressive than MLPs. In our experiments with a few chaotic systems, we have found MLPs to be sufficient for dynamics approximators. More specifically, $\hat{f}$ is trained to approximate $f$, which is most often a smooth function that can be well approximated by an MLP as extensively studied in the literature \citep{cybenko1989approximation}. Additionally, for finite-dimensional ODEs, choosing MLPs as function approximators aligns with the recent line of work in neural operators \citep{li2022learning}, since a neural operator collapses to an MLP in this setting.}

\subsection{Dissipative Projection Ensures Boundedness}

Following Theorem~\ref{thm:main}, if $\dot{x} = f^*(x)$ satisfies the condition in Eq.~\ref{eq:stability}, then the system is guaranteed to be dissipative and converge to the level set $M(c)$ defined by the learned Lyapunov function $V(x)$ and the learned constant $c > 0$. The dissipative projection layer is designed to modify the backbone emulator $\hat{f}$ such that the output $f^*$ produces a learned dynamical model $\dot{x} = f^*(x)$ that satisfies the condition in Eq.~\ref{eq:stability}, therefore certified to be dissipative.

Intuitively, this condition informs a subspace for the vector field $f(x)$ in which the forward dynamics will be dissipative. The dissipative projection layer is designed to project any unconstrained dynamics approximator, $\hat{f}(x)$, into such a subspace to ensure dissipativity.

More specifically, given an input $x \in \R^n$, the dissipative projection layer output $f^*(x)$ is chosen as the vector closest to the approximator $\hat{f}(x)$ under $l^2$ distance in the subspace of $\R^n$ defined by Eq.~\ref{eq:stability}, i.e., $f^*(x)$ is the solution to the following optimization problem:
\begin{subequations}
    \begin{align}
        f^*(x) =\ & \argmin_{f(x)} \|f(x) - \hat{f}(x)\|_2^2\quad \\
    & \text{subject to} \quad \frac{\partial V}{\partial x} f(x) + V(x) - c \leq 0
    \end{align}
\end{subequations}

As discussed extensively earlier, the constraint of the optimization problem, which is adopted from the condition in Eq.~\ref{eq:stability}, ensures the learned dynamics emulator $f^*(x)$ to be dissipative while being computationally efficient. In addition to the fact that the condition is easily verifiable through basic arithmetic operations, the constraint is also linear in the optimization variable $f(x)$. Since the above optimization problem has a quadratic loss and a linear constraint, an explicit solution can be found and computed using ReLU activation, similar to the approach in \cite{min2024hard,min2023data}:
\begin{equation}
    f^*(x)= \hat{f}(x) - {\frac{\partial V}{\partial x}}^T \frac{\text{ReLU}\left(\frac{\partial V}{\partial x} \hat{f}(x) + V(x) - c\right)}{\|\frac{\partial V}{\partial x}\|^2} \label{eq:projection}
\end{equation}

With the dissipative projection layer implemented as in Eq.~\ref{eq:projection}, we state the following corollary that formalizes the dissipativity of our proposed model architecture, which is a direct result of Theorem~\ref{thm:main}. A computational proof verifying $f^*(x)$ satisfies Eq.~\ref{eq:stability} $\forall x \in \R^n$ is included as well.
\begin{corollary}\label{cor:main}
    The learned dynamics model $\dot{x} = f^*(x)$ is a dissipative system with a bounded and positively invariant level set $M(c) = \{x: V(x) \leq c\}$. The set $M(c)$ is globally asymptotically stable, which implies every trajectory of the system is bounded and converges to $M(c)$ asymptotically.
\end{corollary}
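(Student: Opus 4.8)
The plan is to reduce the entire corollary to a single pointwise algebraic check by invoking Theorem~\ref{thm:main}. That theorem already certifies dissipativity, boundedness, positive invariance, and global asymptotic stability of $M(c)$ for \emph{any} system whose field satisfies the condition \eqref{eq:stability}. So it suffices to verify that the projected field $f^*$ defined in \eqref{eq:projection} satisfies \eqref{eq:stability} everywhere, i.e. that $\frac{\partial V}{\partial x} f^*(x) + V(x) - c \leq 0$ for all $x \in \R^n$; everything else in the statement is then inherited directly from the theorem.

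For the computation I would abbreviate $g = \frac{\partial V}{\partial x}$ (a row vector) and let $a = g\,\hat{f}(x) + V(x) - c$ denote exactly the scalar appearing inside the ReLU in \eqref{eq:projection}. Substituting the projection formula and using the identity $g\,g^T = \|g\|^2$, the correction term telescopes: $g\,f^*(x) = g\,\hat{f}(x) - \text{ReLU}(a)$. Adding $V(x)-c$ gives $g\,f^*(x) + V(x) - c = a - \text{ReLU}(a) = \min(a,0) \leq 0$, which is precisely \eqref{eq:stability}. With the condition verified pointwise, Theorem~\ref{thm:main} immediately yields that $M(c)$ is positively invariant and globally asymptotically stable; its boundedness follows because $V$ is radially unbounded, so $\{x : V(x)\leq c\}$ is compact, and the asymptotic convergence of every trajectory to $M(c)$ is the content of global asymptotic stability.

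The one genuinely delicate point, and the main obstacle, is the degenerate case $\|g\| = 0$, where the projection formula \eqref{eq:projection} divides by zero and is formally undefined. I would handle this by observing that for the quadratic $V(x) = (x-x_0)^T Q (x-x_0)$ with $Q \succ 0$, the gradient $\frac{\partial V}{\partial x} = 2(x-x_0)^T Q$ vanishes only at the center $x = x_0$. At that single point $\dot V(x_0) = g\,\hat f(x_0) = 0$ regardless of $\hat f$, so the left-hand side of \eqref{eq:stability} reduces to $V(x_0) - c = -c < 0$ (using $V(x_0)=0$ and $c>0$) and the condition holds automatically; one sets $f^*(x_0) = \hat f(x_0)$ there. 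Everywhere else $\|g\| > 0$ and the telescoping computation above applies verbatim, completing the verification of \eqref{eq:stability} and hence the corollary.
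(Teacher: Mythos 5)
Your proposal is correct and takes essentially the same route as the paper's proof: verify pointwise that $f^*$ satisfies \eqref{eq:stability} and then invoke Theorem~\ref{thm:main}; your identity $g\,f^*(x) + V(x) - c = a - \mathrm{ReLU}(a) = \min(a,0) \leq 0$ is a one-line repackaging of the paper's two-case computation (ReLU inactive versus active, where the correction cancels exactly to give $0$). One genuine addition on your side: you explicitly treat the degenerate point where $\frac{\partial V}{\partial x} = 0$ and \eqref{eq:projection} is formally $0/0$ --- noting that for the quadratic $V$ with $Q \succ 0$ this occurs only at $x = x_0$, where $V(x_0) - c = -c < 0$ makes \eqref{eq:stability} hold automatically with $f^*(x_0) = \hat{f}(x_0)$ --- a case the paper's proof silently skips, so your argument patches a small gap rather than deviating from the paper's approach.
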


\begin{proof}
    If $\frac{\partial V}{\partial x} \hat{f}(x) + V(x) - c\leq 0$, it follows from Eq.~\ref{eq:projection} that $f^*(x) = \hat{f}(x)$. Therefore, in this case, $\frac{\partial V}{\partial x} f^*(x) + V(x) - c \leq 0$, meaning that $f^*(x)$ satisfies the condition in Eq.~\ref{eq:stability}.

    If $\frac{\partial V}{\partial x} \hat{f}(x) + V(x) - c > 0$, then we have
    \begin{align*}
        \frac{\partial V}{\partial x} f^*(x) + V(x) - c
        &= \frac{\partial V}{\partial x} \left[\hat{f}(x) - \frac{\partial V}{\partial x}^T \frac{\text{ReLU}\left(\frac{\partial V}{\partial x} \hat{f}(x) + V(x) - c\right)}{\|\frac{\partial V}{\partial x}\|^2}\right] + V(x) - c\\
        &= \frac{\partial V}{\partial x} \hat{f}(x) - \left(\frac{\partial V}{\partial x} \hat{f}(x) + V(x) - c\right) + V(x) - c = 0,
    \end{align*}
    which ensures that $f^*(x)$ satisfies the stability condition Eq.~\ref{eq:stability}. Therefore, by Theorem~\ref{thm:main}, the system $\dot{x} = f^*(x)$ is dissipative and $M(c)$ is globally asymptotic stable.

    Additionally, since $\forall x \notin M(c)$, $\dot{V}(x) < -(V(x) -c) < 0$, the trajectory always loses energy outside $M(c)$. Also note that if the trajectory starts within $M(c)$, it can never leave, i.e., $V(x(t)) \leq c$ for all $t\geq 0$. Therefore, for any $t\in [0, \infty)$, $V(x(t)) = V(x(0)) + \int_{0}^{t} \dot{V}(x(\tau)) d\tau \leq \max\{V(x(0)), c\}$. Since every level set of $V(x(t))$ is bounded (see proof for Proposition~\ref{prop:attractivity} in SI Appendix), $x(t)$ is always bounded.
\end{proof}

\subsection{Training with Invariant Set Volume Regularization}

We consider a training dataset consisting of trajectory points which are evenly sampled at $h$ [sec] from a few ground truth trajectories initialized at randomly sampled initial conditions. Unlike \citep{li2022learning, jiang2024training}, we do not assume the trajectories in the training set are already inside the attractor, which allows for more flexibility when learning unknown chaotic systems, where the transition period before it reaches an invariant statistics is unknown. This is beneficial for our proposed model to learn where to place the invariant set $M(c)$ and apply dissipative projection. 

During training, we consider a multi-step setting, where we roll out the learned model for $T$ steps, each step sampled at $h$ [sec] using a numerical integration scheme. More specifically, given an initial condition chosen from the training dataset $x_0^{(i)}$, we forward simulate the learned system $\dot{\hat{x}}^{(i)} = f^*(\hat{x}^{(i)})$ with $\hat{x}^{(i)}(0) = x_0^{(i)}$ and obtain sampled states at the same sampling period $h$ [sec], $\hat{x}^{(i)}_k = \hat{x}^{(i)}(kh)$ at $k = 1, 2, ..., T$.  By sampling $N$ such trajectory snapshots of length $T$ from the training dataset, we define the prediction loss as the MSE between the predicted rollout sequence $(\hat{x}_k^{(i)})_{k=1}^T$ and the ground truth sequence $(x_k^{(i)})_{k=1}^T$: $\text{Prediction Loss} = \frac{1}{NT}\sum_{i=1}^N \sum_{k=1}^T \|x_k^{(i)} - \hat{x}_k^{(i)}\|_2^2$.

In the dissipative projection layer, the quadratic Lyapunov function $V(x)$ and the level set parameter $c$ both need to be optimized during training. Although the prediction loss depends on $V$ and $c$ through the projection operator that produces $f^*(x)$, optimizing only the prediction loss may not be a well-defined optimization problem. More specifically, if we have found a level set $M(c_1)$ that is globally asymptotically stable and invariant, then any superset $M(c_2)$ for $c_2 > c_1 > 0$ is also globally asymptotically stable and invariant. Therefore, there could potentially be infinitely many solutions that lead to the small prediction loss. 

To address this issue, we introduce a regularization loss that encourages the learned level set $M(c)$ to be as small as possible, which aligns with our goal of characterizing a tight outer estimate of the strange attractor. Toward this objective, we use the volume of the ellipsoid $M(c)$ as the regularization loss. Combining the prediction and regularization loss, we have the following training loss with a weight hyperparameter $\lambda > 0$ for balancing the regularization terms:
\begin{subequations}\label{eq:loss}
    \begin{align}
        &\text{Loss} = \frac{1}{NT}\sum_{i=1}^N \sum_{k=1}^T \|x_k^{(i)} - \hat{x}_k^{(i)}\|_2^2 + \lambda \text{Vol}(M(c)),\\
        &\text{Vol}(M(c)) = \frac{\pi^{n/2}}{\Gamma\left(\frac{n}{2} + 1\right)} \sqrt{\frac{c^n}{\det(Q)}}.
    \end{align}
\end{subequations}

\section{Results}

Unlike most data-driven methods that prioritize data over structures, our approach aligns the ML model with physics principles by learning both the dynamics and an energy representation and enforcing an energy dissipation constraint. We illustrate our method and demonstrate the effectiveness of providing formal trajectory boundedness guarantees in reliably reproducing invariant statistics through a set of numerical experiments based on Lorenz 63, Lorenz 96, and Kuramoto--Sivashinsky (KS) equation, with an increasing level of complexity. For more quantitative evaluations regarding invariant statistics and more visualizations, please refer to the SI Appendix.

\subsection{Learned flow is dissipative towards the invariant energy level set}
% We first consider the classic Lorenz 63 system to visualize the dissipative behavior of our learned model. As shown in Fig.~\ref{fig:L63_3D}, a long-term trajectory generated by our model (``fstar'') accurately reproduces the characteristic ``butterfly-shaped'' strange attractor of the ground truth (``GT''). The learned energy level set, visualized as a yellow ellipsoid, provides a tight outer-estimate for this attractor. Both the true and learned trajectories quickly converge to and remain within this learned invariant set.

We first consider the classic Lorenz 63 system, originally proposed in \cite{lorenz1963deterministic} as a simplified model for atmospheric convection. The low-dimensionality of the system makes it easy to visualize both the strange attractor and dissipative behaviors of the system flow. As shown in Fig.~\ref{fig:L63}(A), a 50,000-step long-term trajectory rollout generated by our model (``fstar'') accurately reproduces the characteristic ``butterfly-shaped'' strange attractor of the ground truth (``GT''). The learned invariant energy level set, visualized as a yellow ellipsoid with a red dot indicating the learned center $x_0$, provides a tight outer-estimate for the strange attractor, and the trajectory quickly enters the level set and stays in afterwards.

Furthermore, Fig.~\ref{fig:L63}(B) illustrates the learned model's dissipative behavior by comparing its projected flow map on the  $x_1-x_2$ plane with the ground truth. The learned flow not only matches the ground truth near the attractor but also consistently points inward from outside the learned invariant set, confirming the model’s dissipativity by design. Overall, the Lorenz 63 example demonstrates our method's ability to learn 
geometrically interpretable and stable models, which guarantee to generate bounded trajectories reconstructing invariant statistics.

\begin{figure}[h!]
    \centering
    \includegraphics[width=0.6\columnwidth]{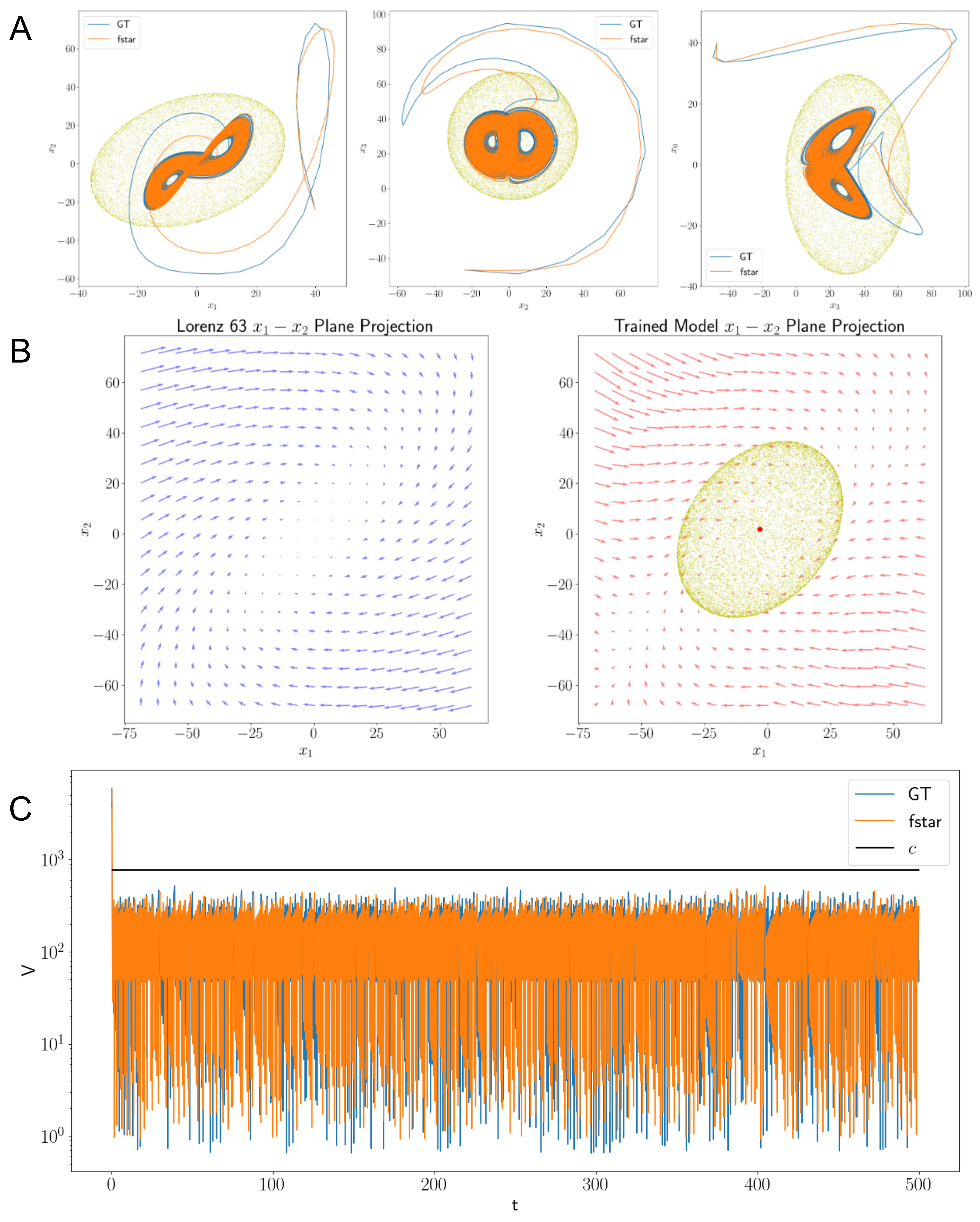}
    \caption{Lorenz 63 (A) Trajectories generated by the learned model (``fstar'') and true dynamics (``GT'') are visualized by their 2D projections, along with the learned invariant set (sampled with yellow points). (B) Comparison of the flow map projected onto $x_1-x_2$ plane. (C) Comparison of the learned energy function time history evaluated on the ground truth trajectory and the trajectory generated by our learned model.}
    \label{fig:L63}. 
\end{figure}

\subsection{Boundedness guarantee prevents finite-time blowup in high-dimensional systems}

To demonstrate the necessity of guaranteed boundedness in data-driven methods, we consider two higher-dimensional chaotic systems: the 5-dimensional Lorenz 96 system and a 32-dimensional reduced-order model of the KS equation (KS-ROM). For these more complex systems, unconstrained machine learning models experience finite-time blow-ups, resulting in prediction trajectories invalid for statistics evaluations.

% \revise{As a higher-dimensional analog of Lorenz 63, the Lorenz 96 system \cite{lorenz1996predictability} is described as a 5-dimensional dynamical system,
% \begin{align*}
%     &\dot{x}_i = (x_{i+1} - x_{i-2})x_{i-1} - x_i + F, \quad i = 1, 2, ..., 5,\\
%     &\quad x_{-1} = x_4, x_{0} = x_5, x_6 = x_1,
% \end{align*}
% where $F = 8.0$ is known to generate chaotic behaviors.}

% \revise{We also consider a 32-dimensional ODE which is a reduced-order model developed based on the (KS) system, a well-known chaotic partial differential equation originally derived in \cite{kuramoto1978diffusion}. This reduced-order model captures the chaotic nature of the original PDE and validates the scalability of our method, where the truncation from the PDE is derived based on the Galerkin projection method \cite{smyrlis1996computational, holmes2012turbulence} with full details in the SI Appendix. }

% The training dataset consists of 4 different trajectories of 500 steps, sampled at 0.01 [sec] using RK4. A single-step prediction setting is used during training, and both a vanilla MLP and our proposed method based on the same MLP are trained on the same dataset.

As illustrated in Fig.~\ref{fig:main}(A), we benchmark against an unconstrained machine learning baseline which uses a standard multilayer perceptron (MLP) architecture to approximate the dynamics $f$ by $\hat{f}$. In the finite-dimensional case, the MLP baseline is equivalent to the Neural Operator \cite{kovachki2023neural} and the DeepONet \cite{lu2021learning}, both are state-of-the-art data-driven methods for general PDE modeling that do not incorporate any stability or energy constraints. For fair comparisons, we use the same MLP backbone for our method, augmenting it with an energy representation and the dissipative projection. For both Lorenz 96 and the reduced-order model for KS (KS-ROM), we train the unconstrained MLP and our proposed model on the exact same datasets, such that any difference in performance would be a direct consequence of the stability constraints. During testing, we sample a random initial condition, and forward simulate both the unconstrained model, $\dot{x} = \hat{f}(x)$, and our model with dissipative projection $\dot{x} = f^*(x)$ in an autoregressive manner (illustrated in Fig.~\ref{fig:main}(A)), and compare the trajectory rollouts generated by these two models. Since these system dimensions exceed three, we use Principal Component Analysis (PCA) to project the trajectory rollouts onto their first two principal components, such that a straightforward 2D visualization of the higher-dimensional trajectories can be presented.

The testing trajectories generated by the unconstrained model $\hat{f}$ completely deviate from the attractor and then exhibit finite--time blow-up for both Lorenz 96 and KS-ROM systems, as illustrated in Fig~\ref{fig:L96_KS_PCA}(A, C) respectively, by scatter plots comparing the first two PCA components of the generated trajectories with their corresponding ground truth trajectories. In contrast, our proposed model with dissipative projection guarantees trajectory boundedness, producing prediction trajectories that recover the shape of the strange attractor for both systems, as shown in Fig~\ref{fig:L96_KS_PCA}(A) and (C). Moreover, by applying the same PCA to sampled points on the boundary of the invariant level set $V(x) \leq c$, Fig.~\ref{fig:L96_KS_PCA}(A, C) not only verify the trajectory quickly converges to our learned level set (illustrated as the red point cloud), but also show that the level set form a tight outer-estimate of the strange attractor.

\begin{figure}[htp]
    \centering
    \includegraphics[width=\linewidth]{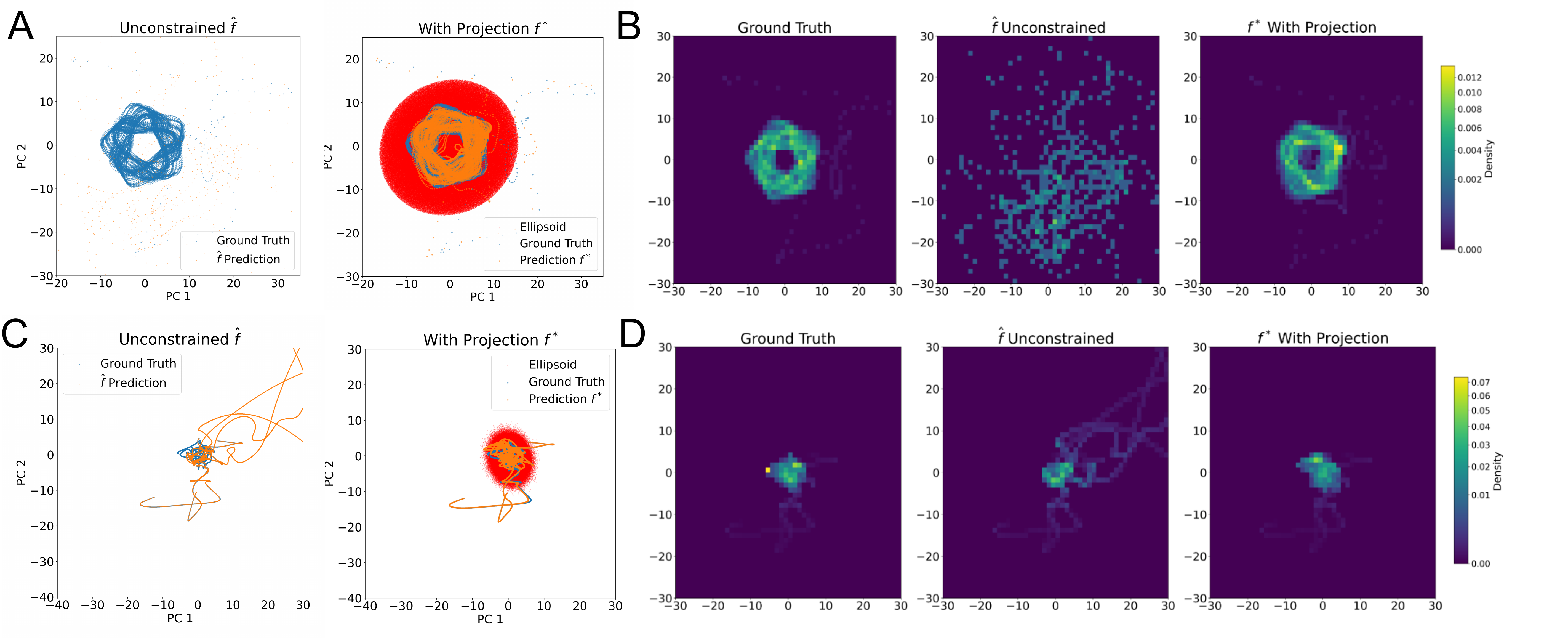}
    \caption{Principal component analysis (PCA) comparison between trajectories generated by unconstrained model ($\hat{f}$) and our proposed model with dissipative projection ($f^*$) for Lorenz 96 (first row, figures A and B) and a reduced-order KS model (second row, figures C and D). Unconstrained model (left figure in (A) and (C)) generates a trajectory that quickly deviates from the attractor and then grows unbounded. Our proposed model (right figure in (A) and (C)) provides boundedness guarantees, which guides the generated trajectory to enter and traverse the strange attractor. In addition, the learned invariant set (red ellipsoid point cloud) provides a tight outer-estimate of the attractor. The 2D histograms represent the probability density of trajectories in the PCA leading components for Lorenz 96 (B) and the reduced-order KS model (D). In both cases, the unconstrained model $\hat{f}$ (middle figure in (B) and (D)) produces an unstructured distribution that scatters across the PCA space, while our proposed model $f^*$ (right figure in (B) and (D)) is able to reproduce the shape and density of the ground truth distribution (left figure in (B) and (D)), which validates its capability to better preserve the system's invariant statistics.}
    \label{fig:L96_KS_PCA}
\end{figure}

\subsection{Enforcing boundedness helps preserve invariant statistics}

A key goal in modeling dissipative chaos is to reproduce the invariant statistics of the system's strange attractor, a task that requires generating bounded trajectories. When training data is limited, unconstrained models often fail by producing divergent trajectories, rendering statistical evaluation meaningless. Our method overcomes this by ensuring trajectory boundedness without prior knowledge of the system's statistics, unlike other approaches that incorporate known measures into the loss function \cite{jiang2024training, schiff2024dyslim}. We achieve this by constraining trajectories to a learned level set that acts as an outer estimate of the attractor, guiding the dynamics to the relevant region of the state space. To demonstrate the efficacy of this constraint, we compare rollouts from our model against an unconstrained baseline using: (1) spatiotemporal plots to illustrate flow patterns and (2) histograms of the leading principal components to show improved statistical distributions. Further visualizations, including Fourier spectra, Fourier modes, and learned energy characteristics, are provided in the SI Appendix.

\begin{figure}[h!]
    \centering
    \includegraphics[width=0.6\columnwidth]{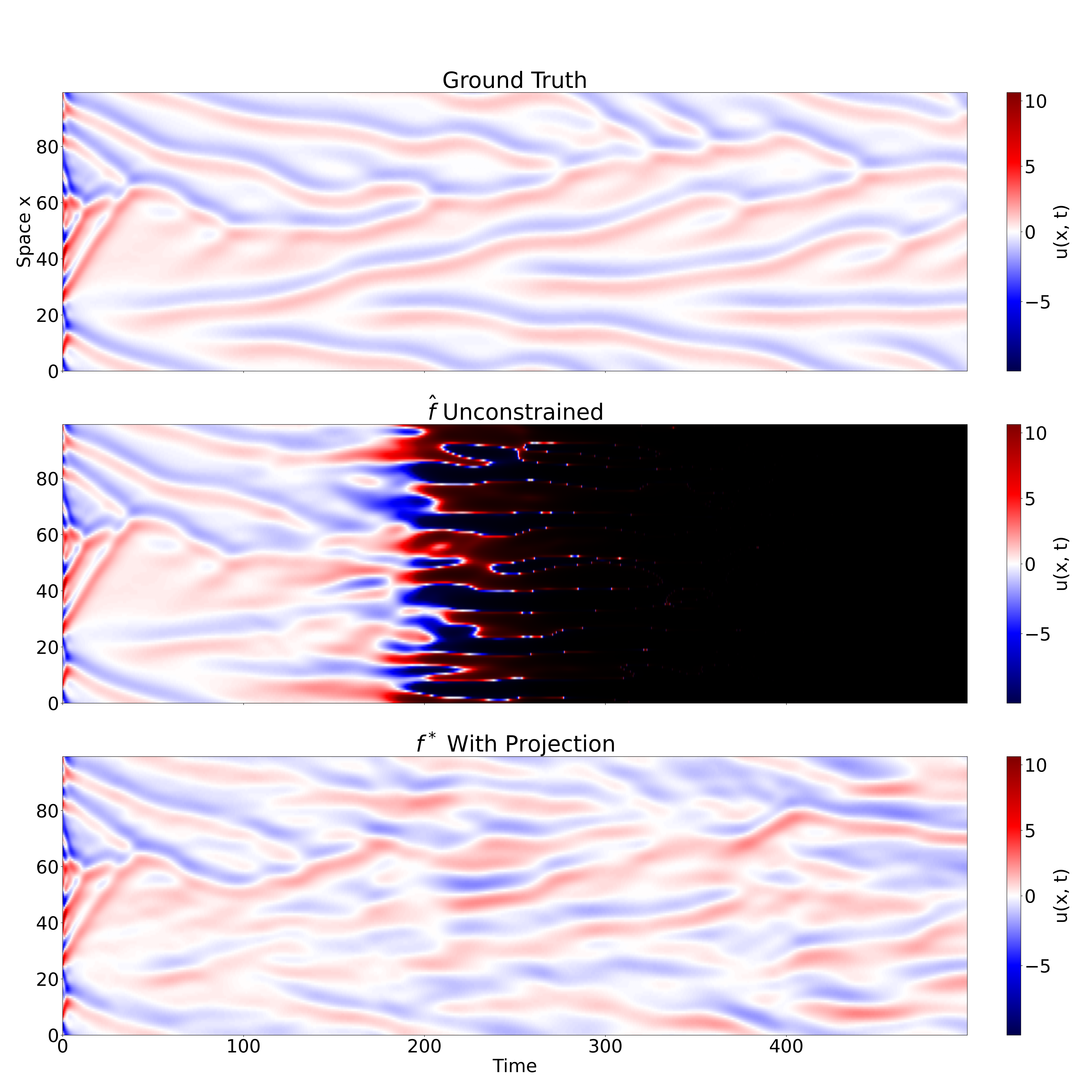}
    \caption{Reproduction of complex emergent dynamics in the Kuramoto-Sivashinsky system. The figure compares the spatiotemporal evolution of the ground truth (top) with predictions from the unconstrained ($\hat{f}$, middle) and our constrained ($f^*$, bottom) models. The unconstrained model exhibits finite-time blowup and grows unbounded. In contrast, our model's trajectory not only remains bounded but also faithfully captures the essential physical phenomena of the system, including the intricate patterns of coarsening events seen in the ground truth.}
    \label{fig:KS-sp}
\end{figure}

The spatiotemporal plots in Fig.~\ref{fig:main}(C) and Fig.~\ref{fig:KS-sp} validate that our model generates trajectories that recover system characteristics while the unconstrained baseline fails and diverges unboundedly. For the Lorenz 96 system, the Hovm\"{o}ller plot \cite{hovmoller1949trough} shows that the unconstrained model's solution diverges to unbounded values. Our model, however, maintains a stable rollout that correctly reproduces the westward-propagating wave patterns seen in the ground truth. Similarly, for the reduced-order model of the Kuramoto-Sivashinsky equation, the solution reconstructed from 32 Fourier modes shows a finite-time blowup for the unconstrained model. In contrast, our model's trajectory remains bounded and successfully captures the complex emergence of cellular chaos and coarsening events characteristic of the true dynamics.

To quantitatively evaluate the preservation of invariant statistics, we project the long-horizon testing trajectory rollouts onto their first two principal components (PCs) and compute the probability density function (PDF) of the projected points. These leading PCs capture dominant modes of variance and provide a low-dimensional visualization of the geometric structure of the attractor. The 2D histograms in Fig.~\ref{fig:L96_KS_PCA}(B, D) compare the PDFs of the ground truth (left), unconstrained model ($\hat{f}$, middle), and our proposed model ($f^*$, right). In the region near the attractor, the unconstrained model's trajectory distribution is scattered and unstructured, failing to capture the distinct characteristics of the ground truth attractor. In contrast, our model not only confines the trajectory to the correct region of the subspace but also reproduces the intricate shape of the invariant measure.

We provide a quantitative comparison of the statistical property prediction errors between the unconstrained model $\hat{f}$ and our constrained model $f^*$ for both Lorenz 96 and KS-ROM systems. First, the Kullback-Leibler (KL) divergence between the predicted and the true PCA distributions further validates the visual assessment in Fig.~\ref{fig:L96_KS_PCA}(B, D): our model's divergence (Lorenz~96: 0.4078; KS-ROM: 0.8355) is significantly lower than that of the unconstrained model (Lorenz~96: 9.7856, KS-ROM: 3.0557) which fails to capture the attractor's geometric characteristics. 
Second, we evaluated the relative error of the Fourier energy spectra acquired from the predicted trajectory rollout compared to ground truth. Here again, the unconstrained model exhibits orders of magnitude higher error due to finite-time blowup (Lorenz~96: 164.10; KS-ROM: 244.47). In stark contrast, our proposed model's spectrum matches the ground truth closely (Lorenz~96: 0.0213; KS-ROM: 0.0001). These metrics demonstrate that our method effectively preserves the invariant statistics of chaotic systems by ensuring trajectory boundedness through dissipative constraints, even when trained on limited data and without prior knowledge of the system's statistical properties.

\section{Discussion}

In this paper, we have proposed a general framework that provides formal guarantees of trajectory boundedness in data-driven chaotic dynamics models, which directly addresses a crucial stability issue in reliable applications of these models to long-horizon forecasts. To achieve such guarantees without heavy reliance on system-specific domain knowledge, we derived efficient algebraic conditions characterizing dissipative dynamics based on analyzing an energy function, by leveraging stability results in control theory. We designed a dissipative projection layer to enforce such dissipative algebraic conditions by construction in our model, which learns the dynamics and the energy function concurrently from data. Additionally, we incorporated an invariant set volume-based regularization, which enables the model to self-discover the energy constraint and an invariant set that serves as a tight outer approximation for the strange attractor. As such, our framework ensures that the learned neural network model is provably dissipative, thus providing trajectory boundedness guarantees. Numerical experiments, including on Lorenz~96 and a reduced-order KS model, demonstrated that our model generates bounded trajectories and, in turn, reconstructs invariant statistics that closely match those of the true dynamics.

% , while unconstrained model exhibited unstable behaviors when trained on the same dataset.

% key point: inductive bias and why it's needed

Notably, we also observed that an unconstrained model without the projection layer, trained on the exact same dataset, generated unbounded trajectories. This ablation revealed both the effectiveness and the necessity of establishing formal guarantees in machine learning models. For physical systems where collecting data can be expensive, a low-data regime, where an unconstrained model cannot learn stable behaviors purely from the given dataset, can become common. Our proposed projection layer suggests that the model quality can be significantly improved by adding the correct inductive bias devised from physical principles, without the need for any additional training data.

% summary, applications, and future work

In summary, we advance data‑driven modeling of chaotic dynamical systems by establishing formal guarantees of trajectory boundedness. By embedding hard constraints as an inductive bias, our models are reliable by construction, consistent with known physical principles, without requiring additional data or compute. This capability is especially valuable in data‑scarce or safety-critical scientific domains, such as climate modeling, optimal design of fluid systems, and autonomous systems. More broadly, our framework for constraining neural‑network outputs mitigates the black‑box nature of ML models and their unpredictable behavior at deployment. Looking ahead, we plan to extend these guarantees to modeling more complex real-world systems and their downstream control tasks. We expect this line of work to inform model design and enable more trustworthy applications in complex cyber-physical systems.

% \clearpage
\section*{Materials and Methods}

\subsection*{Numerical Models of Chaotic Systems}
We used three chaotic systems of increasing complexity for our numerical experiments. 
Ground truth data for all systems was generated by integrating the governing equations 
using the 4th-order Runge-Kutta (RK4) method with a time step of $h = 0.01$ seconds.

\paragraph{Lorenz 63 System.}
The Lorenz 63 system \cite{lorenz1963deterministic} is a 3-dimensional ordinary differential 
equation (ODE) given by:
\begin{align*}
    \dot{x}_1 &= \sigma (x_2 - x_1), \\
    \dot{x}_2 &= x_1(\rho - x_3) - x_2, \\ 
    \dot{x}_3 &= x_1 x_2 - \beta x_3,
\end{align*}
where the state is $x \in \mathbb{R}^3$. We used the standard chaotic parameters 
$\sigma=10.0$, $\beta=8/3$, and $\rho=28.0$.

\paragraph{Lorenz 96 System.}
The Lorenz 96 system \cite{lorenz1996predictability} is a higher-dimensional ODE. We used a 
5-dimensional version described as follows, with the forcing constant set to $F = 8.0$ to generate chaotic behavior.
\begin{align*}
    \dot{x}_i &= (x_{i+1} - x_{i-2})x_{i-1} - x_i + F, \quad i = 1, \dots, 5, \\
    \text{where } & x_{-1} = x_4, x_{0} = x_5, x_6 = x_1.
\end{align*}

\paragraph{Kuramoto--Sivashinsky (KS) Reduced-Order Model (ROM).}
We used a 32-dimensional ODE model derived from the Kuramoto--Sivashinsky (KS) partial 
differential equation \cite{kuramoto1978diffusion}. The reduction was performed using the 
Galerkin projection method \cite{smyrlis1996computational, holmes2012turbulence}. Full details 
on the derivation are provided in the SI Appendix.

\subsection*{Model Architecture and Training}

\paragraph{Model Architecture.}
Our proposed model, $f^*$, and the unconstrained baseline model, $\hat{f}$, share an 
identical multilayer perceptron (MLP) backbone to ensure a fair comparison. 

The MLP backbone $\hat{f}$ consists of two hidden layers and uses a GeLU \cite{hendrycks2016gaussian} activation function for all three systems. For the Lorenz 63 and 96 systems, each hidden layer contains 64 neurons. For the reduced-order KS (KS-ROM) system, the hidden layer width increases to 512 neurons as the dynamics the model tries to capture become much more complex.

Our model, $f^*$, augments this backbone with an additional network to represent the energy function $V(x)$ and a final projection layer that enforces the energy dissipation constraint. The energy function $V(x)$ is parameterized as a quadratic function $V(x) = (x-x_0)^T Q (x-x_0)$, where the positive-definite matrix $Q$ and center $x_0$ are learned. Note that $Q$ is constructed to be positive definite, as the learnable parameters are log-diagonal terms and off-diagonal terms of a lower triangular matrix $L$ forming the Cholesky decomposition $Q = L L^T$.

\paragraph{Training Datasets and Procedure.}
All models were trained using the Adam optimizer with a learning rate of $1\times 10^{-4}$ for 30,000 epochs. For our proposed model, the level set parameter $c$ is initialized at a sufficiently large number, and the regularization weight parameter $\lambda$ can be chosen over a hyperparameter grid search over $10^{-1}, 10^{-2}, ..., 10^{-8}$.

The baseline MLP and our constrained model were trained on the exact same datasets for each system:
\begin{itemize}
    \item \textbf{Lorenz 63:} The dataset consisted of 10 trajectories, each with 200 steps. 
    Training was performed using a multi-step prediction loss with a horizon of $T=5$.
    \item \textbf{Lorenz 96:} The dataset for each system consisted of 4 trajectories, 
    each with 500 steps. Training was performed using a single-step prediction loss.
    \item \textbf{Reduced-order KS:} The dataset for each system consisted of 4 trajectories, 
    each with 50,000 steps. Training was performed using a multi-step prediction loss with a horizon of $T=5$.
\end{itemize}

\subsection*{Evaluation and Visualization}

\paragraph{Trajectory Rollouts and Simulation.}
For testing, long-horizon trajectories were generated by simulating the learned models 
autoregressively from a random initial condition not seen during training. Trajectory 
integration was performed using the RK4 method with a time step of $h = 0.01$ seconds. 
For the Lorenz 63 and the KS-ROM system, 10 trajectories of 50,000 steps were generated for creating visualizations and evaluating the statistical properties of the learned systems. For Lorenz 96, 20 trajectories of 20,000 steps were generated for evaluation.

\paragraph{Statistical Metrics.}
We used two quantitative metrics to evaluate the models' ability to reproduce the invariant 
statistics of the true system from the long-horizon rollouts.
\begin{itemize}
    \item \textbf{KL Divergence:} The Kullback-Leibler (KL) divergence was calculated between the 
    2D probability density function (PDF) of the model's trajectory projected onto its first 
    two principal components and the PDF of the ground truth trajectory. The PDFs were estimated 
    using 2D histograms with $50\times 50$ bins.
    \item \textbf{Spectrum Error:} The Fourier energy spectrum was first computed for the long-horizon trajectory rollouts. 
    To quantify the discrepancy, we calculated the relative L2 norm error between the predicted 
    spectrum ($S_{\text{pred}}$) and the ground truth spectrum ($S_{\text{true}}$). This error is given by the formula:
    $$
    \text{Error} = \frac{\| S_{\text{pred}} - S_{\text{true}} \|_2}{\| S_{\text{true}} \|_2},
    $$
    where $\| \cdot \|_2$ denotes the standard Euclidean norm. This metric measures the 
    overall deviation of the predicted spectrum's shape and magnitude relative to that of the 
    ground truth.
\end{itemize}

\section*{Acknowledgments}
The authors acknowledge the MIT SuperCloud and Lincoln Laboratory Supercomputing Center for providing computing resources that have contributed to the results reported within this paper. This work was supported in part by MathWorks, the MIT-Amazon Science Hub, the MIT-Google Program for Computing Innovation, and the MIT-IBM Watson AI Lab.

% ---------- Bibliography ----------
% If you use a .bib file:
\bibliographystyle{unsrtnat} % or plainnat, or another natbib style
\bibliography{ref}          % refs.bib file

% If you prefer manual bibliography, comment the two lines above and use:
% \begin{thebibliography}{99}
% \bibitem{key1} Author, Title, Journal ...
% \end{thebibliography}

\newpage
\appendix

\section{Proof for Theoretical Results}
\begin{proof}[Proof for Proposition~\ref{prop:invariance}]
    Let $x(0) \in M(c)$, suppose there exists $t > 0$ such that $x(t_0) \notin M(c)$, which implies that $V(x(0)) \leq c < V(x(t_0))$. Since $V(x(t))$ is continuously differentiable in $t$, by the intermediate value theorem, we can find some $t\in [0, t_0)$ such that $V(x(t_0)) = c$, denote $S = \{t: V(x(t)) = c, t \in [0, t_0)\}$. Since $\forall t \in S, t < t_0$, $\sup S \leq t_0$. Suppose $\sup S = t_0$, then we can construct a sequence $(t_k)_{k\in \N}$ such that $t_k \to \sup S$ as $k\to \infty$. By continuity, $V(x(\sup S)) = c$ which contradicts $V(x(t_0)) > c$. Therefore, $\sup S < t_0$. Now by the mean value theorem, there exists $t_1 \in (\sup S, t_0)$ such that $V(x(t_1)) > c$ and $\dot{V}(x(t_1)) = (V(x(t_0)) - V(x(\sup S)))/(t_0 - \sup S) > 0$, which contradicts the assumed condition. Therefore, $\forall t > 0$, we have $V(x(t)) \in M(c)$ as long as $x(0) \in M(c)$, i.e., $M(c)$ is indeed a positively invariant set.
\end{proof}

\begin{proof}[Proof for Proposition~\ref{prop:attractivity}]
    Since the condition here is stronger than the one stated in Proposition 1, $M(c)$ is a positively invariant set. Therefore, it suffices to consider a trajectory that starts outside $M(c)$. Suppose there exists a trajectory $x(t)$ such that $\forall t \in [0, \infty), V(x(t)) > c$, then $\dot{V}(x(t)) < 0$ and $V(x(t))$ is monotonically decreasing over time. Since $V(x(t))$ is lower bounded, $V(x(t)) \to a\geq \inf_x V(x)$ as $t\to \infty$. Suppose $a > c$, i.e., $\text{dist}(x(t), M(c)) = \inf_{y\in M(c)} \|y - x(t)\| \not\to 0$ as $t\to\infty$.

    Since $V$ is radially unbounded, for any $\alpha > 0$, we can find $r_{\alpha}$ such that $V(x) > \alpha$ for all $\|x\| > r_{\alpha}$. Therefore, any level set of $V$ is bounded as $\{x: V(x) \leq \alpha\} \subset B(r_{\alpha})$. Note that $V(x(t)) \in [a, V(x(0))]$ for all $t\in [0, \infty)$ because $V(x(t))$ is monotonically decreasing. Since $V(x)$ is continuous, the pre-image $S = \{x: V(x) \in [a, V(x(0))]$ is a closed set. Additionally, $S$ is bounded because $S \subset \{x: V(x) \leq \alpha\} \subset B(r_{V(x(0))})$, which implies $S$ is compact.
    
    Since $\dot{V}(x)$ is continuous and $\dot{V}(x) < 0$ for all $x \in S$, there is $\gamma > 0$ such that $\max_{x\in S}\dot{V}(x) \leq -\gamma$, which implies $\max_{t\in [0, \infty)}\dot{V}(x(t)) \leq \max_{\|x\|<r_{x(0)}}\dot{V}(x) \leq -\gamma$. This contradicts the fact that ${V}(x(t)) \geq a > -\infty$ for all $t\in [0, \infty)$ since
    \begin{equation*}
        V(x(t)) = V(x(0)) + \int_0^t \dot{V}(x(\tau)) d\tau \leq V(x(0)) - \gamma t.
    \end{equation*}
\end{proof}

\section{Reduced-order Kuramoto-Sivashinsky Model}

We consider the one-dimensional Kuramoto--Sivashinsky (KS) equation on a
periodic domain of length $L$,
\begin{subequations}
    \begin{align}
        u_t(x,t) &= -u_{xx}(x,t) - \nu\,u_{xxxx}(x,t)
                   - \frac{1}{2}\bigl(u(x,t)^2\bigr)_x, \label{eq:KS_phys}\\
        (x,t) &\in [0,L]\times\mathbb{R}^+,\\
        u(x+L,t) &= u(x,t), \label{eq:KS_phys_bc}
    \end{align}
\end{subequations}
where $\nu>0$ is a hyperviscosity parameter. For $L=100$ and $\nu=4$, Eq.~\ref{eq:KS_phys}
is in a strongly chaotic regime, consistent with the settings used in previous
spectral studies of the KS equation
\citep{smyrlis1996computational,trefethen2000spectral,boyd2001chebyshev}.

Following \citep{smyrlis1996computational} and the general Galerkin projection framework
for dissipative PDEs \citep{holmes2012turbulence}, we construct a truncated
reduced-order model of Eq.~\ref{eq:KS_phys} on the periodic domain. Using the complex Fourier basis
$\{\mathrm{e}^{\mathrm{i}k_n x}\}$, we expand
\begin{equation}\label{eq:KS_fourier_expansion}
    u(x,t) = \sum_{n=-\infty}^{\infty} \hat u_n(t)\,\mathrm{e}^{\mathrm{i}k_n x}, \quad k_n = \frac{2\pi n}{L},\qquad n\in\mathbb{Z}
\end{equation}
with complex Fourier coefficients $\hat u_n(t)\in\mathbb{C}$. Since the solution operator $u$ always has real values, the Hermitian symmetry $\hat u_{-n}(t) = \overline{\hat u_n(t)}$ needs to hold. Due to the periodic boundary conditions, 
the energy conservation suggests that $\hat{u}_0(t)$ will be constant in time. Without loss of generality, we assume $\hat{u}_0(t) = 0$, similar to the treatment in \citep{smyrlis1996computational}.

Using partial derivatives
\(
    \partial_{xx} \mathrm{e}^{\mathrm{i}k_n x} = -k_n^2 \mathrm{e}^{\mathrm{i}k_n x}, 
    \partial_{xxxx} \mathrm{e}^{\mathrm{i}k_n x} = k_n^4 \mathrm{e}^{\mathrm{i}k_n x},
\)
the linear part of Eq.~\ref{eq:KS_phys} becomes
\[
    -u_{xx} - \nu u_{xxxx}
    = \sum_{n=-\infty}^{\infty} \bigl(k_n^2 - \nu k_n^4\bigr)\,\hat u_n(t)\,
      \mathrm{e}^{\mathrm{i}k_n x}.
\]
For the nonlinear quadratic term $\bigl(u(x,t)^2\bigr)_x$, we first expand the Fourier series for
\[
    u(x,t)^2
    = \sum_{p=-\infty}^{\infty} \sum_{q=-\infty}^{\infty}
      \hat u_p(t)\,\hat u_q(t)\,\mathrm{e}^{\mathrm{i}(k_p+k_q)x}
    = \sum_{n=-\infty}^{\infty} \widehat{u^2}_n(t)\,\mathrm{e}^{\mathrm{i}k_n x},
\]
where the Fourier coefficients of $u^2$ are given by the convolution
\begin{equation}\label{eq:KS_conv_full}
    \widehat{u^2}_n(t)
    = \sum_{p+q=n} \hat u_p(t)\,\hat u_q(t),\qquad n\in\mathbb{Z}.
\end{equation}
Differentiating in $x$, we have the nonlinear term rewritten as
\[
    -\frac{1}{2}\bigl(u(x,t)^2\bigr)_x
    = \sum_{n=-\infty}^{\infty}
      \left( -\frac{\mathrm{i}k_n}{2}\,\widehat{u^2}_n(t)\right)
      \mathrm{e}^{\mathrm{i}k_n x}.
\]
The above derivation transforms the PDE in Eq.~\ref{eq:KS_phys} into the following infinite-dimensional ODEs,
\begin{equation}\label{eq:KS_modal_infinite}
    \frac{\mathrm{d}}{\mathrm{d}t}\hat u_n(t)
    = \bigl(k_n^2 - \nu k_n^4\bigr)\,\hat u_n(t)
      - \frac{\mathrm{i}k_n}{2}\,\widehat{u^2}_n(t),
    \qquad n\in\mathbb{Z},\ \hat u_0 = 0,
\end{equation}
which is a standard Galerkin projection of the KS equation on a periodic domain
\citep{smyrlis1996computational,trefethen2000spectral,boyd2001chebyshev}.

To obtain a finite-dimensional reduced-order model, we truncate the expansion
Eq.~\ref{eq:KS_fourier_expansion} to modes $|n|\le M$. Since the basis is chosen to be orthonormal, the truncated Fourier series satisfies the following finite-dimensional system of complex ODEs,
\begin{subequations}\label{eq:KS_modal_truncated}
    \begin{align}
        \frac{\mathrm{d}}{\mathrm{d}t}\hat u_n(t)
        &= \bigl(k_n^2 - \nu k_n^4\bigr)\,\hat u_n(t)
           - \frac{\mathrm{i}k_n}{2}
             \sum_{\substack{p+q=n\\ |p|,|q|\le M}} \hat u_p(t)\,\hat u_q(t),
        \\
        &\qquad\qquad\qquad\quad n = -M,\dots,-1,1,\dots,M.
    \end{align}
\end{subequations}

To simplify the dataset for our learning framework, we redefine the state vector as a vector of real-valued entries. Specifically, every Fourier coefficient in Eq.~\ref{eq:KS_modal_infinite} can be written as
\(
    \hat u_n(t) = a_n(t) + \mathrm{i} b_n(t), a_n(t),b_n(t)\in\mathbb{R}.
\)
Using the symmetry $\hat u_{-n} = \overline{\hat u_n}$, we can represent all $2M$ complex coefficients by the set of functions $\{a_n(t), b_n(t)\}_{n=1}^M$. We collect these into a $2M$--dimensional real state,
\begin{equation}\label{eq:KS_ROM_state}
    x(t)
    = \bigl(a_1(t),\dots,a_M(t),\ b_1(t),\dots,b_M(t)\bigr)^\top
    \in \mathbb{R}^{2M},
\end{equation}
and denote the resulting truncated ODE by
\begin{equation}\label{eq:KS_ROM_ode}
    \dot x(t) = f_{\mathrm{KS}}(x(t)).
\end{equation}
In numerical experiments we choose $M=16$ and form a $32$-dimensional truncated ODE. This low-dimensional model retains the chaotic behavior of the KS dynamics with the chosen scale and viscosity parameter, as discussed in previous studies of low-dimensional KS models \citep{smyrlis1996computational,holmes2012turbulence}.

In practice, the convolution in Eq.~\ref{eq:KS_conv_full} is evaluated pseudospectrally: at each time step we reconstruct $u(x,t)$ on an even-distance grid of $N_{\text{grid}}$ collocation points via an inverse FFT, compute $u(x,t)^2$ pointwise in physical space, and apply a forward FFT to obtain $\widehat{u^2}_n(t)$ \citep{trefethen2000spectral,boyd2001chebyshev}. To
control aliasing errors from the quadratic nonlinearity, we follow the $2/3$--rule dealiasing by choosing $N_{\text{grid}}\ge 3M$ and zeroing the highest one-third of modes in the FFT of $u^2$ before reading off $\widehat{u^2}_n$ for $|n|\le M$.

\section{Additional Numerical Results}\label{sec:append_results}

\subsection{Spatial Fourier spectra}\label{sec:append_stats_eval}
In the main text, we presented PCA-based analyses (Figure~4) that demonstrate our model's ability to generate bounded trajectories whose distributions, projected onto the first two principal components, closely match those of the true systems, while unconstrained models exhibit finite-time blow-up and yield unreliable statistics. In addition, we reported quantitative errors for the Fourier energy spectra in the Materials and Methods section.

Here, we complement those results with visualizations of the spatial Fourier energy spectra for the Lorenz 96 system and the KS reduced-order model (KS-ROM). For Lorenz--96, we treat the discrete states $x(t)\in\mathbb{R}^5$ as a periodic grid with cyclic indexing (i.e., $x_n = x_{n+5}, n\in \mathbb{Z}$) and define discrete spatial Fourier modes $\hat x_k(t)$ and the time-averaged spectrum $E(k) = \langle |\hat x_k(t)|^2\rangle_t$. For KS-ROM, the 32-dimensional state consists of the real and imaginary parts of the first $M=16$ complex Fourier coefficients $\hat u_k(t)$ of the KS solution, and we define the spectrum as $E(k) = \langle |\hat u_k(t)|^2\rangle_t$. Figures~\ref{fig:L96_spectra} and~\ref{fig:KS_spectra} compare these spectra for the ground-truth dynamics, the projected model $f^*$, and the unconstrained model $\hat f$ for Lorenz 96 and KS--ROM, respectively. In both systems, $f^*$ preserves the shape and scale of the ground-truth spectrum across the resolved modes, whereas $\hat f$ exhibits substantial spectral distortion, consistent with the quantitative spectrum errors reported in the main text.

\begin{figure}[h!]
    \centering
    \begin{subfigure}{.9\textwidth}
        \centering
        \includegraphics[width=0.9\textwidth]{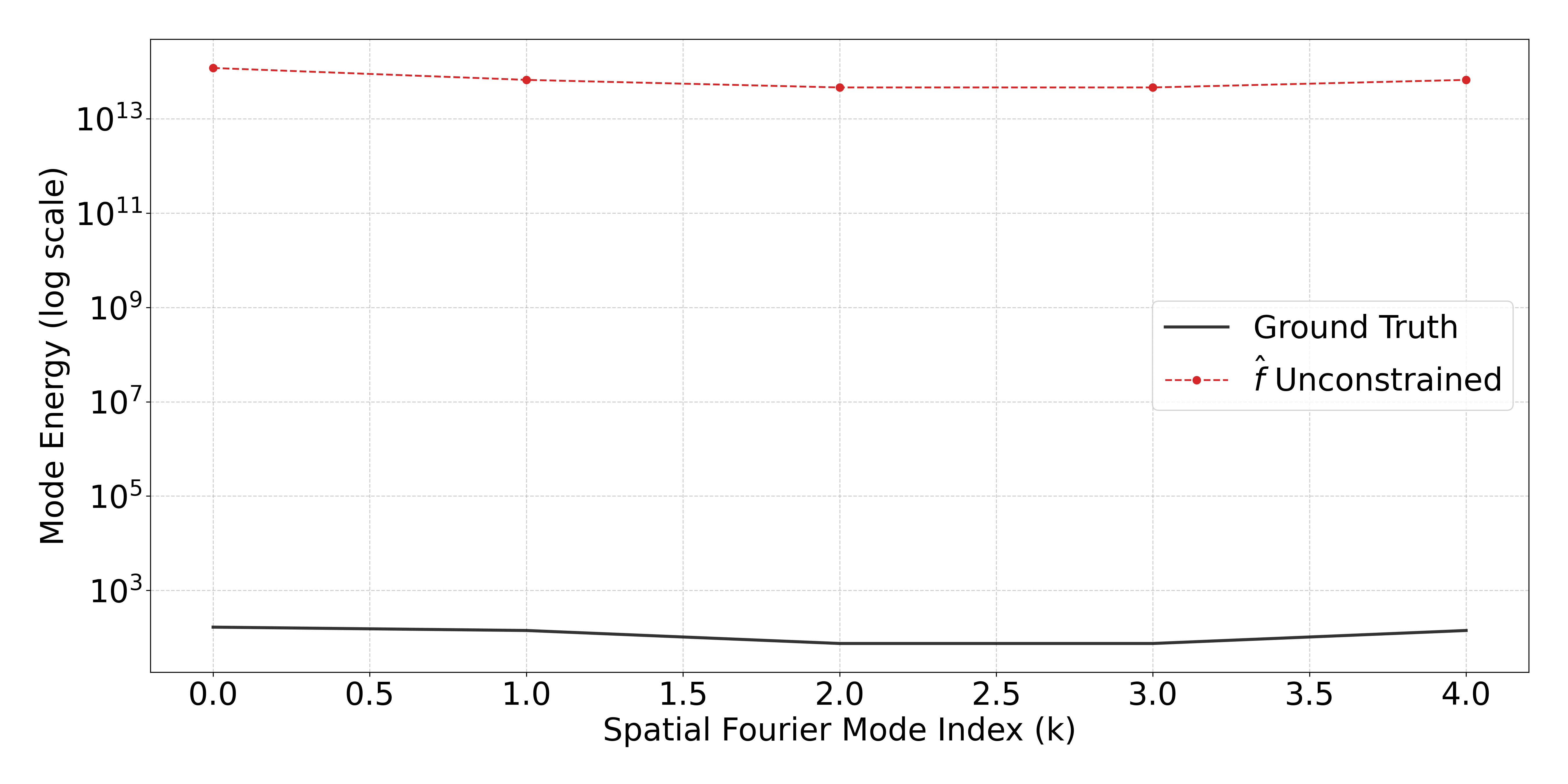}
    \end{subfigure}%
    \\
    \begin{subfigure}{.9\textwidth}
        \centering
        \includegraphics[width=0.9\textwidth]{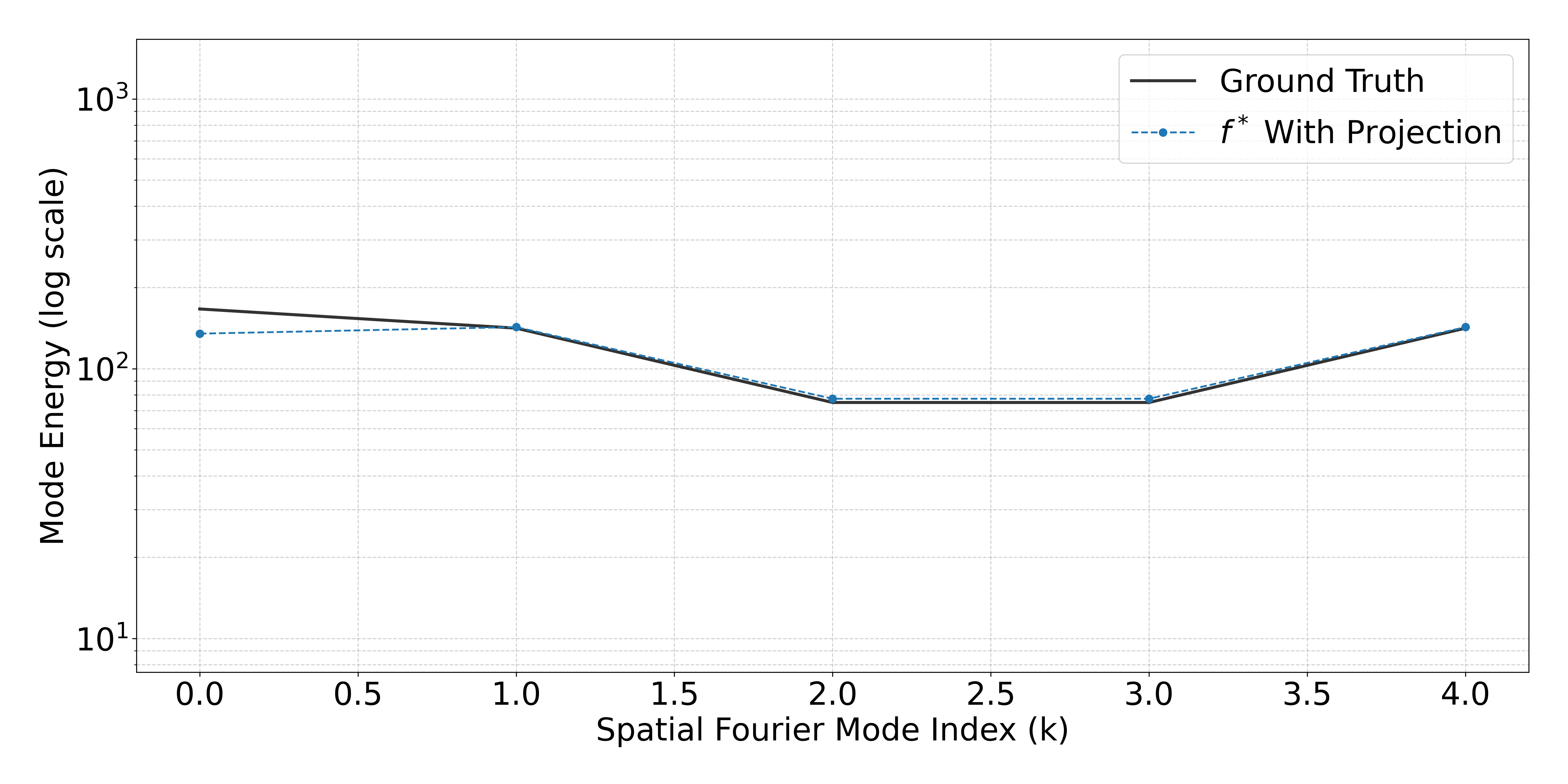}
    \end{subfigure}%
    \caption{Fourier spectrum prediction for Lorenz 96: (top) The spatial spectrum constructed from the prediction trajectory generated by the unconstrained model $\hat{f}$ (red dashed line) does not contain meaningful statistics as the prediction is orders of magnitudes higher than the true spectrum (black solid line). (bottom) With projection, the spectrum constructed from our model $f^*$ (blue dashed line) closely matches with the true spectrum (black solid line).}
    \label{fig:L96_spectra}
\end{figure}

\begin{figure}[h!]
    \centering
    \begin{subfigure}{.9\textwidth}
        \centering
        \includegraphics[width=0.9\textwidth]{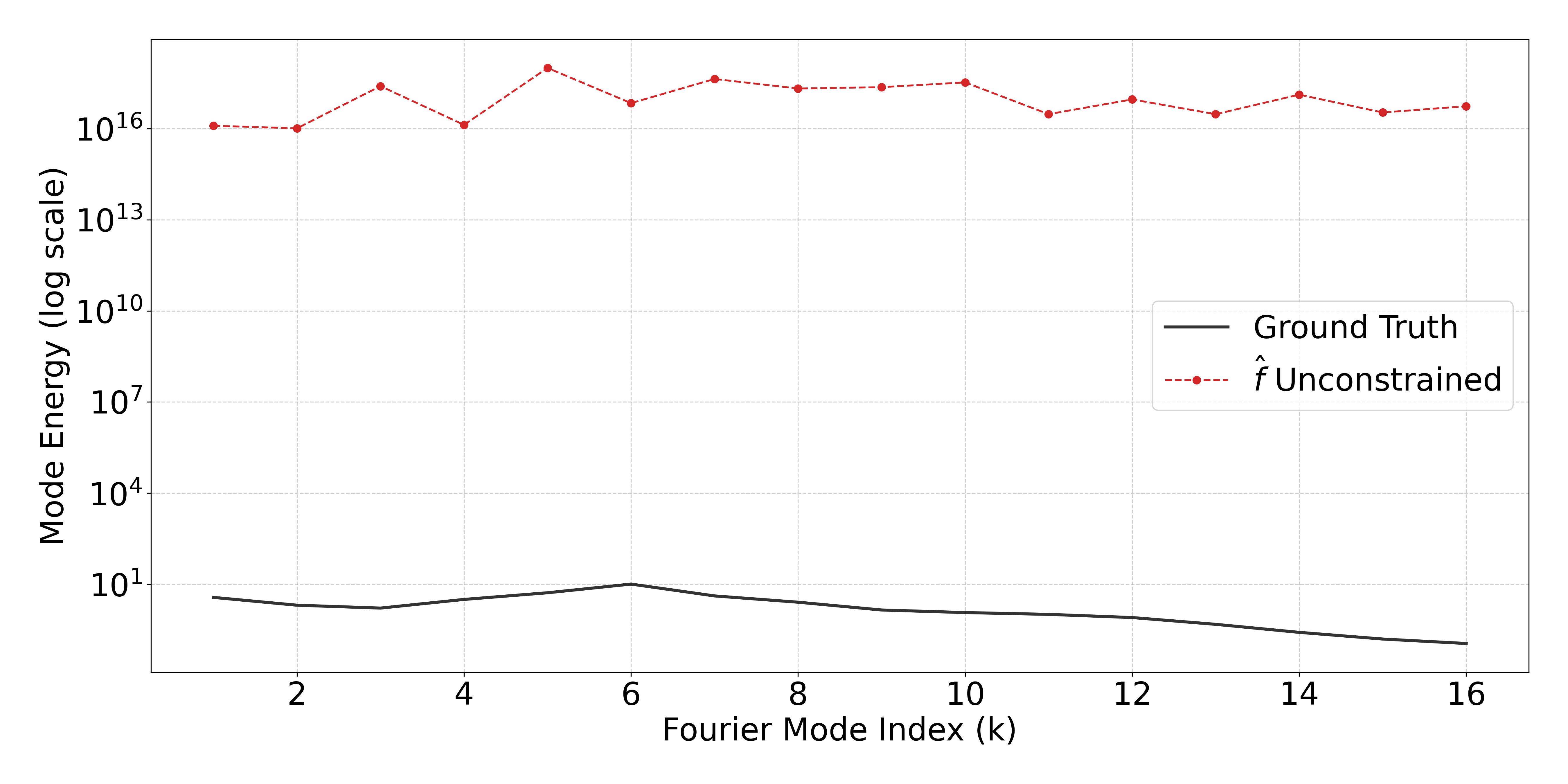}
    \end{subfigure}%
    \\
    \begin{subfigure}{.9\textwidth}
        \centering
        \includegraphics[width=0.9\textwidth]{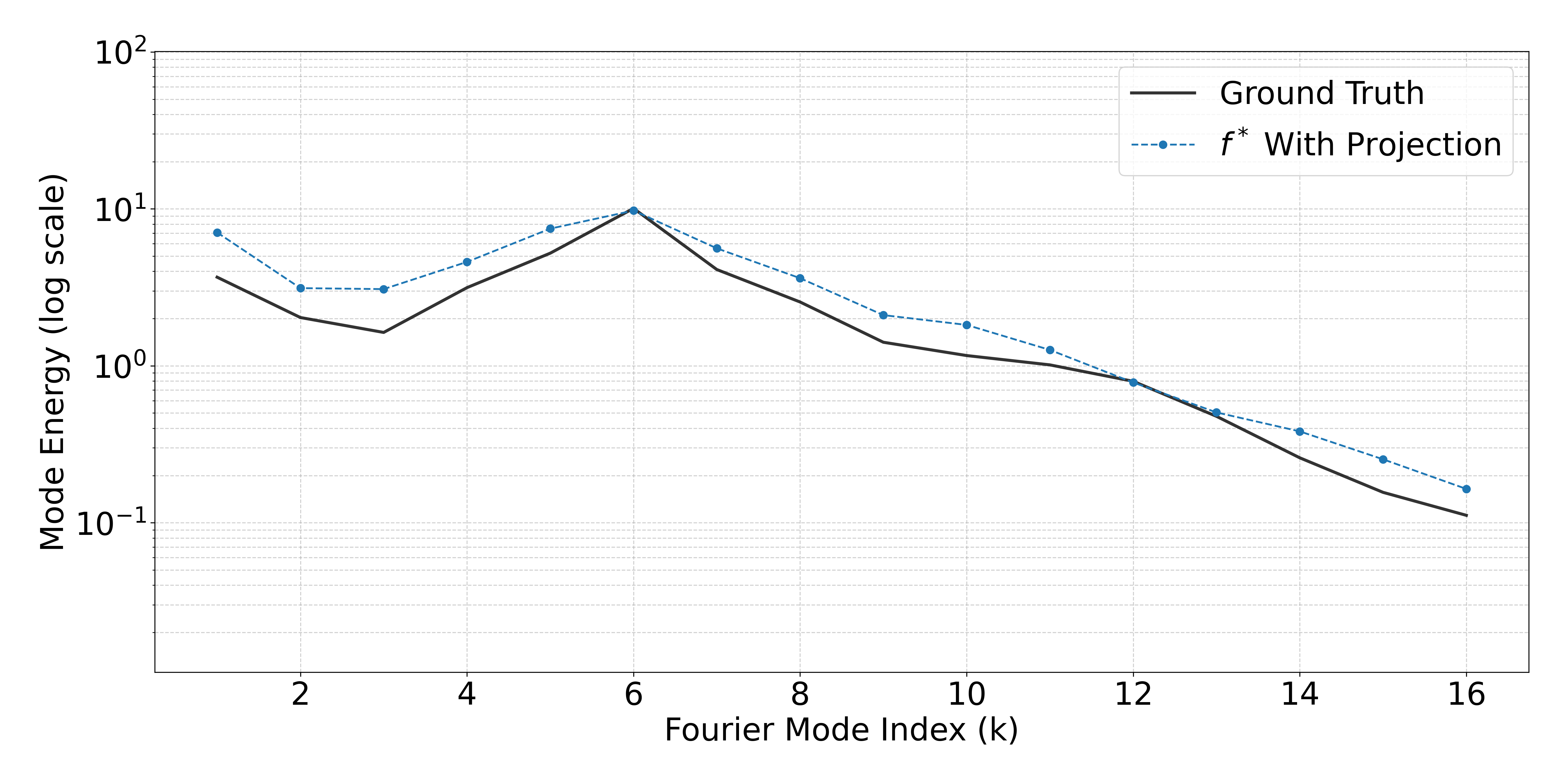}
    \end{subfigure}%
    \caption{Fourier spectrum prediction for KS--ROM: (top) The spatial spectrum constructed from the prediction trajectory generated by the unconstrained model $\hat{f}$ (red dashed line) does not contain meaningful statistics as the prediction is orders of magnitude higher than the true spectrum (black solid line). (bottom) With projection, the spectrum constructed from our model $f^*$ (blue dashed line) reproduces the energy distribution over different Fourier modes in the true system (black solid line).}
    \label{fig:KS_spectra}
\end{figure}

\clearpage
\subsection{Learned energy representation time histories}\label{sec:append_Lyap_visual}
To visualize the learned Lyapunov (energy) function and its connection to the dissipative behavior of the learned models, we evaluate a single learned energy function $V$ along trajectories generated by the true dynamics, the projected model, and the unconstrained model. Specifically, $V$ is the quadratic energy $V(x) = (x - x_0)^\top Q (x - x_0)$ learned jointly with the constrained model $f^*$. We apply this same $V$ to states from the true system $\dot{x} = f(x)$, the projected model $\dot{x}^* = f^*(x^*)$, and the unconstrained model $\dot{\hat{x}} = \hat{f}(\hat{x})$, all started from a common initial condition $x(0) = x^*(0) = \hat{x}(0)$.

In the main text we showed the energy time history for the Lorenz~63 system (Figure~3C). Here, we provide analogous visualizations for Lorenz 96 and the KS reduced-order model (KS-ROM). Figures~\ref{fig:L96_energy} and~\ref{fig:KS_energy} plot the time evolution of $V(x(t))$, $V(x^*(t))$, and $V(\hat{x}(t))$ along long rollouts of 200 seconds. In both systems, the energy trajectories under $f^*$ rapidly enter and remain within the learned invariant level set $\{x : V(x)\le c\}$ on a similar time scale as the true system, and the range of energy values attained by $f^*$ closely matches that of the ground-truth dynamics. In contrast, when the same energy function $V$ is evaluated along trajectories of the unconstrained model $\hat{f}$, the energy grows unbounded, consistent with the finite-time blow-up these models exhibit. These plots further illustrate that the learned energy function acts as a Lyapunov-like function that stabilizes the dynamics of $f^*$ and captures the overall scale of the attractors for Lorenz 96 and KS-ROM.

\begin{figure}[h!]
    \centering
    \begin{subfigure}{.9\textwidth}
        \centering
        \includegraphics[width=0.9\textwidth]{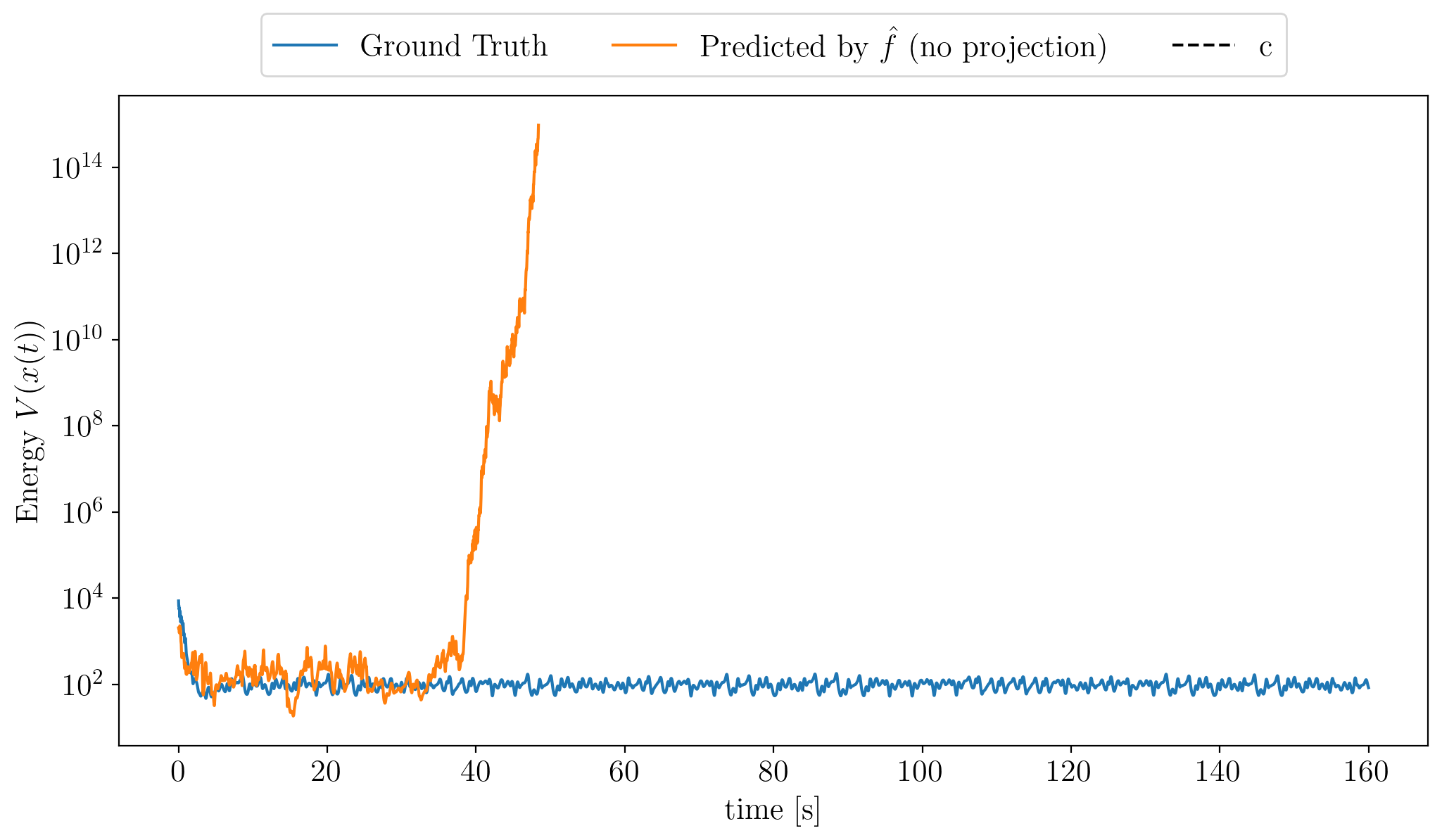}
    \end{subfigure}%
    \\
    \begin{subfigure}{.9\textwidth}
        \centering
        \includegraphics[width=0.9\textwidth]{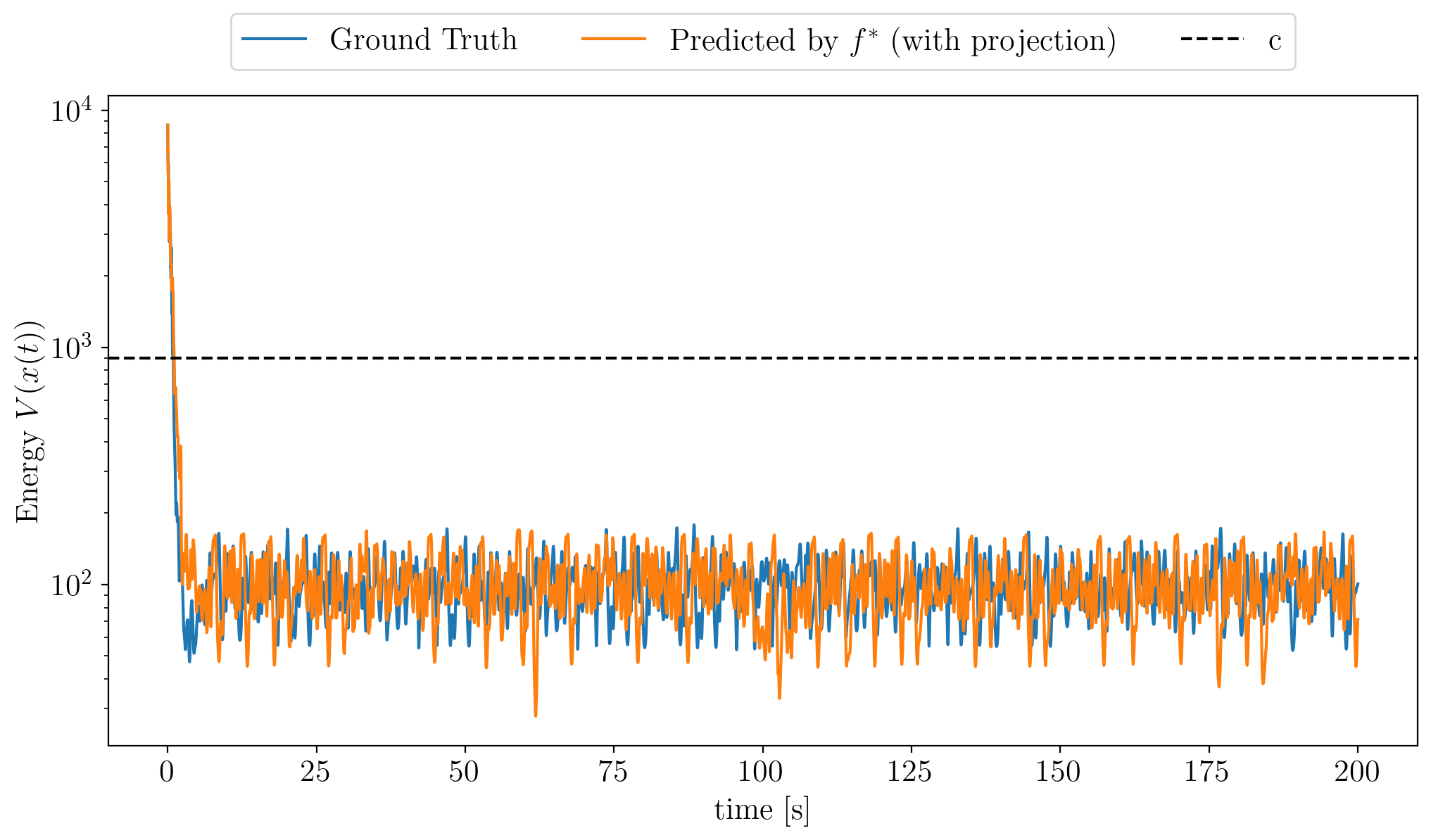}
    \end{subfigure}%
    \caption{Learned energy time histories for Lorenz–96. (Top) Energy $V(x)$ evaluated along a trajectory of the unconstrained model $\hat{f}$: the energy grows rapidly and becomes unbounded after roughly $50$ seconds, reflecting that the model exhibits finite-time blow-up. (Bottom) Energy $V(x)$ along a trajectory of the ground-truth system and the projected model $f^*$: both quickly enter and remain inside the learned invariant level set ${x : V(x)\le c}$ (black dashed line) at the same time scale, and the range of energy values attained by $f^*$ closely matches that of the true system.}
    \label{fig:L96_energy}
\end{figure}

\begin{figure}[h!]
    \centering
    \begin{subfigure}{.9\textwidth}
        \centering
        \includegraphics[width=0.9\textwidth]{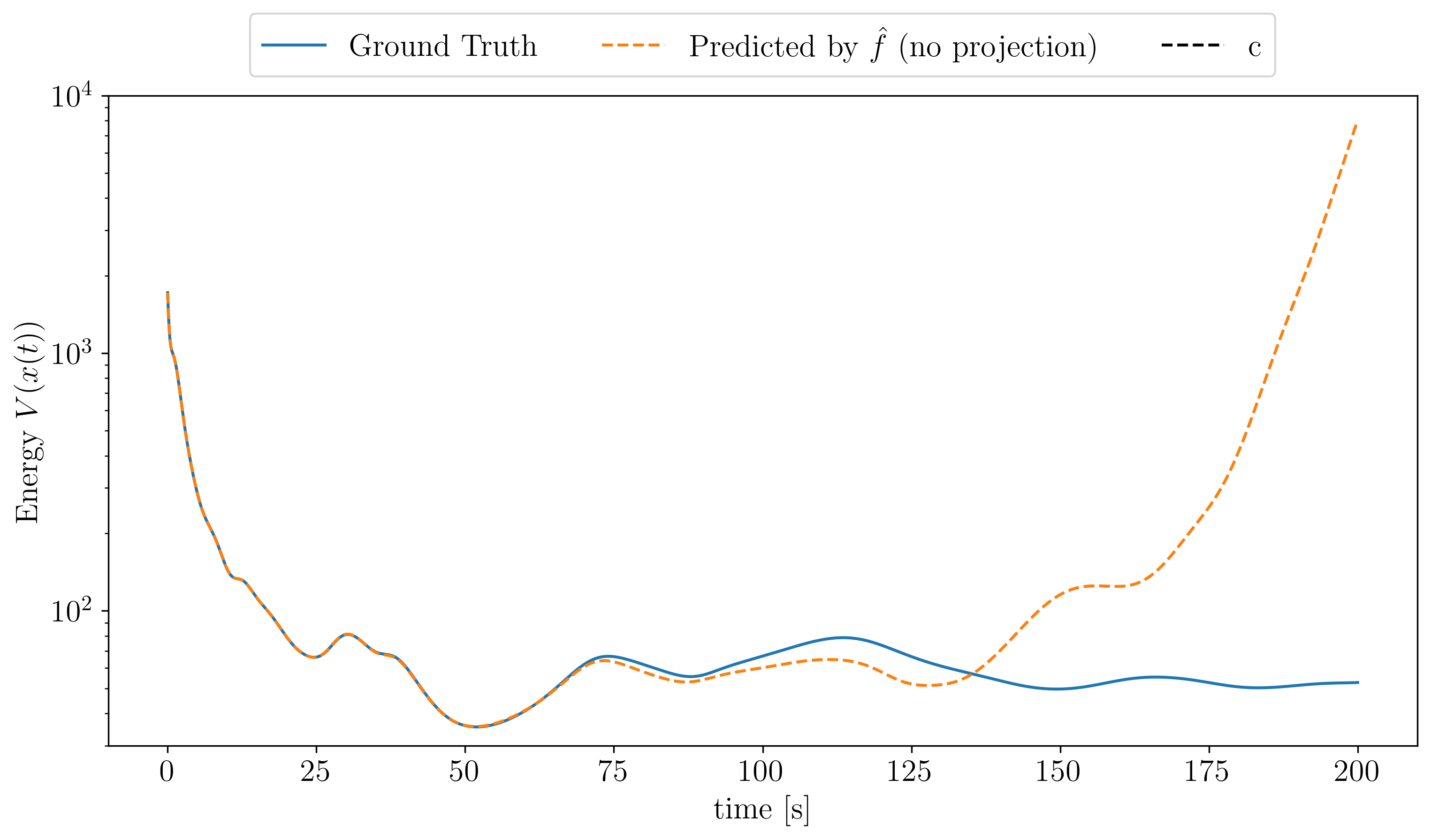}
    \end{subfigure}%
    \\
    \begin{subfigure}{.9\textwidth}
        \centering
        \includegraphics[width=0.9\textwidth]{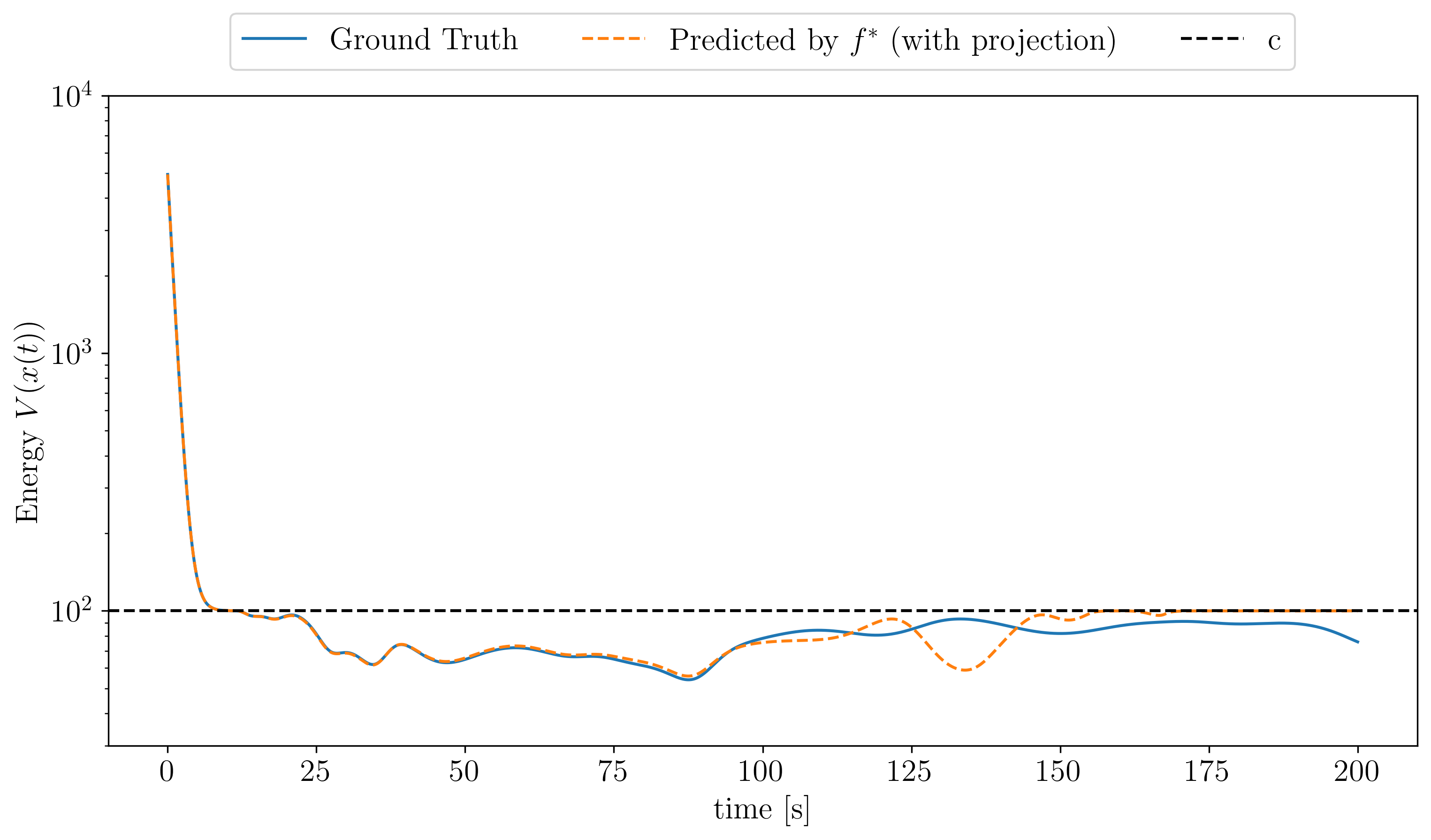}
    \end{subfigure}%
    \caption{Learned energy time histories for KS-ROM. (Top) Energy $V(x)$ along trajectories of the unconstrained model $\hat f$: after an initial period where values remain accurate, the energy grows rapidly and becomes unbounded shortly after roughly $125$ seconds, reflecting that the model exhibits finite-time blow-up. (Bottom) Energy $V(x)$ along trajectories of the ground-truth KS-ROM and the projected model $f^*$: both quickly enter and remain within the learned invariant level set ${x : V(x)\le c}$ (black dashed line), at the same time scale. Beyond roughly $125$ seconds, the trajectory under $f^*$ remains confined under the black dashed line while gradually losing pointwise accuracy, illustrating that the learned invariant set tightly bounds the true system's energy range and stabilizes the long-term dynamics even when prediction accuracy degrades.}
    \label{fig:KS_energy}
\end{figure}

\end{document}